\documentclass[a4paper,USenglish]{llncs}
\usepackage{graphicx}
\usepackage{amsmath}
\usepackage{fullpage}
\usepackage[numbers,sort&compress]{natbib}
\usepackage[draft,mode=multiuser]{fixme}
\fxuseenvlayout{colorsig}
\fxusetargetlayout{color}
\FXRegisterAuthor{jy}{ajy}{JY}
\FXRegisterAuthor{hb}{ahb}{HB}
\FXRegisterAuthor{si}{asi}{SI}
\FXRegisterAuthor{mt}{amt}{MT}

\usepackage[algoruled,linesnumbered,algo2e,vlined]{algorithm2e}

\usepackage{microtype}%
%




\newcommand{\Substr}{\mathit{Substr}}
\newcommand{\Prefix}{\mathit{Prefix}}

\newcommand{\SSuffix}{\mathit{SparseSuffix}}
\newcommand{\lcp}{\mathit{lcp}}
\newcommand{\RLEDAWG}{\mathit{RLE\_DAWG}}
\newcommand{\Endpos}{\mathit{EndPos}}
\newcommand{\RLE}{\mathit{RLE}}

\newcommand{\RLEsubstr}{\mathit{RLE\_Substr}}

\newcommand{\RLEsuffix}{\mathit{RLE\_Suffix}}

\newcommand{\RLESA}{\mathit{RLE\_SA}}
\newcommand{\RLERANK}{\mathit{RLE\_RANK}}

\newcommand{\val}{\mathit{val}}
\newcommand{\size}{\mathit{size}}

\newcommand{\maxe}{\mathrm{mpe}}
\newcommand{\longest}[1]{\overleftarrow{#1}}

\newcommand{\Insert}[2]{\mathit{Insert}(#1,#2)}
\newcommand{\Delete}[2]{\mathit{Delete}(#1,#2)}
\newcommand{\MinXInRectangle}[3]{\mathit{MinXInRectangle}(#1,#2,#3)}
\newcommand{\MaxXInRectangle}[3]{\mathit{MaxXInRectangle}(#1,#2,#3)}
\newcommand{\MaxYInRange}[2]{\mathit{MaxYInRange}(#1,#2)}

\author{
  Jun'ichi Yamamoto
  \and
  Hideo Bannai
  \and
  Shunsuke Inenaga
  \and
  Masayuki Takeda
}
\institute{
  Department of Informatics, Kyushu University
  \email{\{junichi.yamamoto,bannai,inenaga,takeda\}@inf.kyushu-u.ac.jp}\\
}

\title{
  Time and Space Efficient Lempel-Ziv Factorization
  based on Run Length Encoding
}

\begin{document}
\maketitle

\begin{abstract}
  We propose a new approach for calculating the Lempel-Ziv
  factorization of a string, based on run length encoding (RLE).
  We present a conceptually simple off-line algorithm based on 
  a variant of suffix arrays,
  as well as an on-line algorithm based on a variant of
  directed acyclic word graphs (DAWGs).
  Both algorithms run in $O(N+n\log n)$ time and $O(n)$ extra space,
  where $N$ is the size of the string, $n\leq N$ is the number of RLE factors.
  The time dependency on $N$ is only in the conversion of the string to RLE,
  which can be computed very efficiently in $O(N)$ time and $O(1)$
  extra space (excluding the output).
  When the string is compressible via RLE, i.e., $n = o(N)$, 
  our algorithms are, to the best of our knowledge, the first algorithms
  which require only $o(N)$ extra space while running in $o(N\log N)$ time.

\end{abstract}
\section{Introduction}
The \emph{run-length encoding} (RLE) of a string $S$ is a natural encoding of $S$,
where each maximal run of character $a$ of length $p$
in $S$ is encoded as $a^p$,
e.g., the RLE of string $\mathtt{aaaabbbaa}$ is $\mathtt{a^4 b^3 a^2}$.
Since RLE can be regarded as a compressed representation of strings,
it is possible to reduce the processing time and working space
if RLE strings are not decompressed while being processed.
Many efficient algorithms that deal with RLE versions of
classical problems on strings have been proposed in the literature
(e.g.:
exact pattern
matching~\cite{bunke93,apostolico99:_match_run_lengt_encod_strin,amir03:_inplac},
approximate matching~\cite{makinen03:_approx_match_run_lengt_compr_strin,apostolico12:_param},
edit distance~\cite{bunke95,arbell02:_edit,liu07:_editd,chen11:_fully_compr_algor_comput_edit},
longest common subsequence~\cite{freschi04:_longes,liu08:_findin},
rank/select structures~\cite{lee09:_dynam}, palindrome detection~\cite{chen12:_effic}).
In this paper, we consider the problem of 
computing the \emph{Lempel-Ziv factorization} (\emph{LZ factorization})
of a string via RLE.

The LZ factorization (and its variants) of a
string~\cite{LZ77,LZSS,crochemore84:_linear}, discovered over 30 years ago,
captures important properties concerning repeated occurrences
of substrings in the string,
and has applications in the field of data compression,
as well as being the key component
to various efficient algorithms on strings~\cite{kolpakov99:_findin_maxim_repet_in_word,duval04:_linear}.
Therefore, there exists a large amount of work devoted to its
efficient computation.
A na\"ive algorithm that computes the longest common prefix
with each of the $O(N)$ previous positions only requires
$O(1)$ space (excluding the output),
but can take $O(N^2)$ time, where $N$ is the length
of the string.
Using string indicies such as suffix trees~\cite{Weiner}
and on-line algorithms to construct them~\cite{Ukk95},
the LZ factorization can be computed in an on-line
manner in $O(N\log|\Sigma|)$ time and $O(N)$ space,
where $|\Sigma|$ is the size of the alphabet.
Most recent
algorithms~\cite{chen08:_lempel_ziv_factor_using_less_time_space,crochemore08:_simpl_algor_comput_lempel_ziv_factor,crochemore09:_lpf_comput_revis,a.ss:_lempel_ziv_lz77,ohlebusch11:_lempel_ziv_factor_revis}
first construct the suffix array~\cite{manber93:_suffix} of the string,
consequently taking $O(N)$ extra space and
at least $O(N)$ time, and are off-line.

Since the most efficient algorithms 
run in worst-case linear time and are practical,
it may seem that not much better can be achieved. 
However, a theoretical interest is whether or not we can achieve even
faster algorithms, at least in some specific cases.
In this paper, we propose a new approach for calculating the Lempel-Ziv
factorization of a string, which is based on its \emph{RLE}.
The contributions of this paper are as follows:
We first show that the size of the LZ encoding with self-references
(i.e., allowing previous occurrences of a factor to
overlap with itself) is at most twice as large as the size of its RLE.
We then present two algorithms that compute the LZ factorizations
of strings given in RLE:
an off-line algorithm based on suffix arrays for RLE strings,
and an on-line algorithm based on directed acyclic word graphs (DAWGs)~\cite{blumer85:_small_autom_recog_subwor_text}
for RLE strings.
Given an RLE string of size $n$, both algorithms work for 
general ordered alphabets,
and run in $O(n\log n)$ time and $O(n)$ space.
Since the conversion from a string of size $N$ to its RLE can be
conducted very efficiently in $O(N)$ time and $O(1)$ extra space
(excluding the output),
the total complexity is $O(N+n\log n)$ time and $O(n)$ space.

In the worst-case, 
the string is not compressible by RLE, i.e. $n = N$.
Thus, for integer alphabets, our approach can be slightly slower
than the fastest existing algorithms which run in $O(N)$ time (off-line) or
$O(N\log|\Sigma|)$ time (on-line).
However, for general ordered alphabets,
the worst-case complexities of our algorithms match previous algorithms
since the construction of the suffix array/suffix tree can take
$O(N\log N)$ time if $|\Sigma|=O(N)$.
The significance of our approach is
that it allows for improvements in the computational complexity
of calculating the LZ factorization for a non-trivial
family of strings; strings that are compressible via RLE.
If $n = o(N)$,
our algorithms are, to the best of our knowledge, the
first algorithms which require only $o(N)$ extra space while running
in $o(N\log N)$ time.


%

%
\vspace*{3mm}
\noindent \textbf{Related Work.} 
For computing the LZ78~\cite{LZ78} factorization,
a sub-linear time and space algorithm was presented
in~\cite{jansson07:_compr_dynam_tries_applic_lz}.
In this paper, we consider a variant of the more powerful
LZ77~\cite{LZ77} factorization.
Two space efficient on-line algorithms for LZ factorization based on succinct data
structures have been proposed~\cite{okanohara08:_onlin_algor_findin_longes_previous_factor,starikovskaya12:_comput_lempel_ziv_factor_onlin}.
The first runs in $O(N\log^3 N)$ time and 
$N\log|\Sigma| + o(N\log|\Sigma|) + O(N)$ bits of
space~\cite{okanohara08:_onlin_algor_findin_longes_previous_factor},
and the other runs in $O(N\log^2 N)$ time with $O(N\log|\Sigma|)$ bits of
space~\cite{starikovskaya12:_comput_lempel_ziv_factor_onlin}.
Succinct data structures basically simulate accesses to their non-succinct
counterparts using less space at the expense of speed.
A notable distinction of our approach is that we 
aim to reduce the {\em problem size} via compression,
in order to improve both time and space efficiency.

\section{Preliminaries}

Let $\mathcal{N}$ be the set of non-negative integers.
Let $\Sigma$ be a finite {\em alphabet}.
An element of $\Sigma^*$ is called a {\em string}.
The length of a string $S$ is denoted by $|S|$. 
The empty string $\varepsilon$ is the string of length 0,
namely, $|\varepsilon| = 0$.
For any string $S\in\Sigma$, let $\sigma_S$ denote the number of
distinct characters appearing in $S$.
Let $\Sigma^+ = \Sigma^* - \{\varepsilon\}$.
For a string $S = XYZ$, $X$, $Y$ and $Z$ are called
a \emph{prefix}, \emph{substring}, and \emph{suffix} of $S$, respectively.
The set of prefixes of $S$ is denoted by $\Prefix(S)$.
The \emph{longest common prefix} of strings $X,Y$, denoted $\lcp(X, Y)$, 
is the longest string in $\Prefix(X) \cap \Prefix(Y)$.
The $i$-th character of a string $S$ is denoted by 
$S[i]$ for $1 \leq i \leq |S|$,
and the substring of a string $S$ that begins at position $i$ and
ends at position $j$ is denoted by $S[i..j]$ for $1 \leq i \leq j \leq |S|$.
For convenience, let $S[i..j] = \varepsilon$ if $j < i$.

For any character $a \in \Sigma$ and $p \in \mathcal{N}$,
let $a^{p}$ denote the concatenation of $p$ $a$'s,
e.g., $a^1 = a$, $a^2 = aa$, and so on.
$p$ is said to be the exponent of $a^p$.
Let $a^0 = \varepsilon$.

Our model of computation is the word RAM
with the computer word size at least $\lceil \log_2 |S| \rceil$ bits,
and hence, standard instructions on
values representing lengths and positions of string $S$
can be executed in constant time.
Space complexities will be determined by the number of computer words (not bits).

\subsection{LZ Encodings}
LZ encodings are dynamic dictionary based encodings with many variants.
As in most recent work, we describe our algorithms with respect to a
well known variant 
called s-factorization~\cite{crochemore84:_linear} 
in order to simplify the presentation.

\begin{definition}[s-factorization~\cite{crochemore84:_linear}]
  \label{def:s_factorization}
  The s-factorization of a string $S$ is
  the factorization $S = f_1 \cdots f_n$ where each
  s-factor $f_i\in\Sigma^+~(i=1,\ldots,n)$
  is defined inductively as follows:
  $f_1 = S[1]$. For $i \geq 2$:
  if $S[|f_1\cdots f_{i-1}|+1] = c \in \Sigma$ does not occur in $f_1\cdots f_{i-1}$,
  then $f_i = c$. Otherwise, $f_i$ is the longest prefix of $f_i
  \cdots f_n$ that occurs at least twice in $f_1 \cdots f_i$.
\end{definition}
Note that each s-factor can be represented in constant size,
i.e., 
either as a single character or a pair of integers representing the
position of a previous occurrence of the factor and its length.
For example the s-factorization of the string
  $S = \mathtt{abaabababaaaaabbabab}$ is
  $\mathtt{a}$,
  $\mathtt{b}$,
  $\mathtt{a}$,
  $\mathtt{aba}$,
  $\mathtt{baba}$,
  $\mathtt{aaaa}$,
  $\mathtt{b}$,
  $\mathtt{babab}$. This can be represented as
  $\mathtt{a}$, $\mathtt{b}$, $(1,1)$, $(1,3)$,
  $(5,4)$, $(10,4)$, $(2,1)$, $(5,5)$.
In this paper, we will focus on describing algorithms that output only
the length of each factor of the s-factorization, but it is not
difficult to modify them to output the previous position 
as well, in the same time and space complexities.

\subsection{Run Length Encoding}
\begin{definition}
  The Run-Length (RL) factorization of a string $S$ is
  the factorization $f_1,\ldots,f_n$ of $S$ such that for every $i=1,\ldots,n$,
  factor $f_i$ is the longest prefix of $f_i \cdots f_n$ with 
  $f_i \in \{a^p \mid a \in \Sigma, p > 0\}$.
\end{definition}

Note that each factor $f_i$ can be written as $f_i=a_i^{p_i}$ for some character 
$a_i\in\Sigma$ and some integer $p_i>0$ and
for any consecutive factors $f_i = a_i^{p_i}$ and $f_{i+1} = a_{i+1}^{p_{i+1}}$,
we have that $a_i\neq a_{i+1}$.
The {\em run length encoding} (RLE) of a string $S$, denoted
$\RLE_S$,
is a sequence of pairs consisting of a character $a_i$ and an integer
$p_i$, representing the RL factorization.
The {\em size} of $\RLE_S$ is the number of RL factors in $\RLE_S$
and is denoted by $\size(\RLE_S)$,
i.e., if $\RLE_S = a_1^{p_1} \cdots a_n^{p_n}$, then $\size(\RLE_S)  = n$.
$\RLE_S$ can be computed in $O(N)$ time and
$O(1)$ extra space (excluding the $O(n)$ space for output),
where $N = |S|$,
simply by scanning $S$ from beginning to end, counting the exponent of
each RL factor.
Also, noticing that each RL factor must consist of the same alphabet,
we have $\sigma_S \leq n$.

Let $\val$ be the function that ``decompresses'' 
$\RLE_S$, i.e., $\val(\RLE_S) = S$.
For any $1 \leq i \leq j \leq n$,
let $\RLE_S[i..j] = a^{p_i}_{i} a^{p_{i+1}}_{i+1} \cdots a^{p_j}_{j}$.
For convenience, let $\RLE_S[i..j] = \varepsilon$ if $i > j$.
Let 
$\RLEsubstr(S) = \{ \RLE_S[i..j] \mid 1 \leq i, j \leq n \}$ and
$\RLEsuffix(S) = \{ \RLE_S[i..n] \mid 1 \leq i \leq n \}$.

The following simple but nice observation allows us to represent the
complexity of our algorithms in terms of $\size(\RLE_S)$. 
\begin{lemma} \label{lem:RLE_LZ_linear}
  For a given string $S$, 
  let $n_{\mathit{RL}}$ and $n_{\mathit{LZ}}$ respectively be the
  number of factors
  in its RL factorization and s-factorization.
  Then, $n_{\mathit{LZ}} \leq 2n_{\mathit{RL}}$.
\end{lemma}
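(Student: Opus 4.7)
The plan is a simple charging argument. I would show that for each RL factor $a^p$ spanning positions $[b+1,b+p]$, at most two s-factor end-positions lie inside $[b+1,b+p]$. Since the s-factor ends $c_1 < c_2 < \cdots < c_{n_{\mathit{LZ}}}$ partition $\{1,\ldots,N\}$ among the $n_{\mathit{RL}}$ RL-intervals, this immediately yields $n_{\mathit{LZ}} \le 2\,n_{\mathit{RL}}$.

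For the key claim, argue by contradiction: suppose three consecutive s-factor ends $x<y<z$ all lie in $[b+1,b+p]$. Focus on the s-factor ending at $y$. It begins at $x+1 \ge b+2$ and ends at $y\le z-1\le b+p-1$, so it sits entirely inside the all-$a$ block and equals $a^{y-x}$. I will show that this factor should have been extended to $a^{y-x+1}$, contradicting its maximality. First, the extended factor is a legitimate prefix of $S[x+1..N]$ because $y+1\le b+p$ forces $S[y+1]=a$. Second, the extended factor occurs a second time in $S[1..y+1]$ at starting position $x$: since $x\ge b+1$ is still inside the RL factor we have $S[x]=a$, hence $S[x..y]=a^{y-x+1}$. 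Thus $a^{y-x+1}$ has two (overlapping) occurrences in $S[1..y+1]$, so the s-factorization rule forces the longer factor to be chosen, a contradiction.

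The only delicate step is that the second occurrence is \emph{self-referential}: the two matches of $a^{y-x+1}$ start exactly one position apart. The definition of s-factorization used in the paper explicitly allows such overlapping occurrences, so the argument goes through; without self-references, the proof would break and one could only obtain a weaker bound. The remaining work is routine bookkeeping to verify that the positions $x$, $y$, $y+1$ all stay inside the RL factor, which is precisely the reason we assumed three (not two) s-factor ends inside, since an s-factor end landing exactly at $b+p$ must be tolerated and cannot be ruled out in the same way.
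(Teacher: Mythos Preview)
Your proof is correct and uses essentially the same idea as the paper: both show that at most two s-factors can be charged to each RL factor, by observing that an s-factor lying strictly inside a run $a^p$ can always be extended via the self-referencing occurrence of $a^{k+1}$ one position to the left. The only cosmetic difference is that the paper counts s-factor \emph{starts} per RL factor (arguing directly that any s-factor starting at position $j>1$ of $a_i^{p_i}$ must reach the end of that RL factor, since $a_i^{p_i-j+1}$ is both a suffix and a prefix of $a_i^{p_i}$), whereas you count s-factor \emph{ends} and package the same extension argument as a middle-of-three contradiction.
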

\begin{proof}
  Consider an s-factor that starts at the $j$th position
  in some RL-factor $a_i^{p_i}$ where $1 < j \leq p_i$.
  Since $a_i^{p_i - j + 1}$ is both a suffix and a prefix of $a_i^{p_i}$,
  we have that the s-factor extends at least to the end of $a_i^{p_i}$.
  This implies that a single RL-factor is always covered by
  at most 2 s-factors, thus proving the lemma.
\end{proof}
Note that for LZ factorization variants without self-references,
the size of the output LZ encoding may come into play, when it is
larger than $O(\size(\RLE_S))$.

\subsection{Priority Search Trees} \label{sec:3-sided}

In our LZ factorization algorithms,
we will make use the following data structure,
which is essentially an elegant mixture of a priority heap and balanced search tree.
\begin{theorem}[McCreight~\cite{mccreight:priority_search_tree_1985}] \label{theo:PST}
For a dynamic set $D$ which contains $n$ ordered pairs of integers,
the priority search tree (PST) data structure 
supports all the following operations and queries in $O(\log n)$ time,
using $O(n)$ space:
\begin{itemize}
 \item $\Insert{x}{y}$: Insert a pair $(x, y)$ into $D$;
 \item $\Delete{x}{y}$: Delete a pair $(x, y)$ from $D$;
 \item $\MinXInRectangle{L}{R}{B}$: Given three integers $L \leq R$ and $B$,
       return the pair $(x, y) \in D$ with minimum $x$ satisfying
       $L \leq x \leq R$ and $y \geq B$;
 \item $\MaxXInRectangle{L}{R}{B}$: Given three integers $L \leq R$ and $B$,
       return the pair $(x, y) \in D$ with maximum $x$ satisfying
       $L \leq x \leq R$ and $y \geq B$;
 \item
   $\MaxYInRange{L}{R}$: Given two integers $L \leq R$,
   return the pair $(x, y) \in D$ with maximum $y$ satisfying
   $L \leq x \leq R$.
\end{itemize}
\end{theorem}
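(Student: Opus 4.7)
The statement is the classical priority search tree (PST) result of McCreight, so the plan is to reconstruct the argument rather than discover something new. I would build the PST as a hybrid data structure: a balanced binary search tree (say a red-black tree) whose leaves hold the pairs of $D$ sorted by their $x$-coordinate, while each internal node stores one pair chosen so that the $y$-coordinates satisfy a max-heap property along root-to-leaf paths. Concretely, each internal node is labeled with the pair of maximum $y$ among the pairs in its subtree that has not already been ``promoted'' to some ancestor; thus every pair sits at the highest node of the BST whose subtree contains it, subject to its ancestors hosting pairs of larger $y$. Since the underlying BST has height $O(\log n)$, every root-to-leaf path has length $O(\log n)$.

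Having fixed this representation, I would verify each operation in turn. For \emph{insertion} $\Insert{x}{y}$, walk down the BST on the $x$-coordinate; whenever the current node's stored pair has $y$-value smaller than the new pair's $y$, swap the two and continue downward with the displaced pair. The BST invariants on $x$ are preserved because the displaced pair is still legitimately in the current subtree, and the heap property is maintained by construction. Standard red-black rebalancing after reaching a leaf touches $O(\log n)$ nodes, each costing $O(1)$ extra work to re-establish the heap labels. \emph{Deletion} is symmetric: remove the leaf, then walk upward from the BST-node that held the deleted pair and repeatedly pull up the larger-$y$ child's stored pair to restore the heap property, again $O(\log n)$ work.

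For $\MaxYInRange{L}{R}$, I would decompose $[L,R]$ into $O(\log n)$ canonical subtrees using the BST structure. By the heap property, the maximum $y$ inside any such subtree is the $y$-value of the pair stored at its root, so taking the max over $O(\log n)$ candidates answers the query. The two three-sided queries $\MinXInRectangle{L}{R}{B}$ and $\MaxXInRectangle{L}{R}{B}$ are handled by a pruning descent: never enter a subtree whose stored pair has $y<B$ (its entire subtree is then $y$-dominated and hence irrelevant), and for the parts of the path spanning the $[L,R]$ boundary, explore children in the order that favors small (resp.\ large) $x$ so that the first candidate found can be returned.

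The main obstacle is the $O(\log n)$ bound for the three-sided queries. The careful argument follows McCreight: split the descent into the two $O(\log n)$ boundary paths toward $L$ and $R$, plus the ``interior'' recursion, where the heap-based pruning guarantees that each visited interior node yields a pair with $y\ge B$ lying in $[L,R]$, and in a min-$x$ (resp.\ max-$x$) query at most one such pair per level is needed before the search can terminate in the corresponding direction. Combining the balanced-BST height bound with this pruning gives $O(\log n)$ time per query, and the $O(n)$ space bound is immediate since the tree has $O(n)$ nodes each storing $O(1)$ information.
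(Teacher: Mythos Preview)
Your reconstruction of McCreight's argument is essentially correct and follows the classical approach, but note that the paper does not prove this theorem at all: it is stated as a cited result from~\cite{mccreight:priority_search_tree_1985} with no accompanying proof. There is therefore nothing to compare against; your sketch simply supplies what the paper deliberately outsources to the reference.
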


\section{Off-line LZ Factorization based on RLE} \label{sec:LZ77fromSA}

In this section we present our off-line algorithm 
for s-factorization. 
The term off-line here implies that 
the input string $S$ of length $N$ is first converted to 
a sequence of RL factors, $\RLE_S = a_{1}^{p_1} a_{2}^{p_2} \cdots a_{n}^{p_n}$.
In the algorithm which follows,
we introduce and utilize RLE versions of classic string data structures.

\subsection{RLE Suffix Arrays}
Let $\Sigma_{\RLE_S} = \{ \RLE_S[i] \mid i = 1, \ldots, n\}$.
For instance, if $\RLE_S = \mathtt{a^3b^5a^3b^5a^1b^5a^4}$,
then $\Sigma_{\RLE_S} = \{\mathtt{a^1}, \mathtt{a^3}, \mathtt{a^4}, \mathtt{b^5}\}$.
For any $a_{i}^{p_{i}}, a_{j}^{p_{j}} \in \Sigma_{\RLE_S}$,
let the order $\prec$ on $\Sigma_{\RLE_S}$ be defined as
$
  a_{i}^{p_{i}} \prec a_{j}^{p_{j}} 
  \iff
  a_{i} < a_{j}, \mbox{ or } a_{i} = a_{j} \mbox{ and } p_{i}
  < p_{j}.
$
The lexicographic ordering on $\RLEsuffix(S)$ is defined over
the order on $\Sigma_{\RLE_S}$,
and our RLE version of suffix arrays~\cite{manber93:_suffix}
is defined based on this order: 
\begin{definition}[RLE suffix arrays]
  For any string $S$, its \emph{run length encoded suffix array}, denoted 
  $\RLESA_S$, is an array of length $n = \size(\RLE_S)$ such that
  for any $1 \leq i \leq n$,
  $\RLESA_S[i] = j$ when $\RLE_S[j..n]$ is the lexicographically
  $i$-th element of $\RLEsuffix(S)$.
\end{definition}

Let $\SSuffix(S) = \{\val(s) \mid s \in \RLEsuffix(S)\}$,
namely, $\SSuffix(S)$ is the set of ``uncompressed'' RLE suffixes of string $S$.
Note that the lexicographic order of $\RLEsuffix(S)$
represented by $\RLESA_S$ is not necessarily
equivalent to the lexicographic order of $\SSuffix(S)$.
In the running example, $\RLESA_S = [5, 3, 1, 7, 4, 2, 6]$.
However, the lexicographical order for the elements in $\SSuffix(S)$
is actually $(7, 1, 3, 5, 6, 2, 4)$.

\begin{lemma} \label{lem:RLESA}
  Given $\RLE_S$ for any string $S\in\Sigma^*$,
  $\RLESA_S$ can be constructed in $O(n\log n)$ time, 
  where $n = \size(\RLE_S)$.
\end{lemma}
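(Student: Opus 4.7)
The plan is to reduce the construction of $\RLESA_S$ to standard suffix array construction over an integer alphabet by viewing $\RLE_S$ as a length-$n$ string over the abstract alphabet $\Sigma_{\RLE_S}$ of (character, exponent) pairs.

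First, I would build a sorted list of the distinct symbols appearing in $\RLE_S$. Since each symbol is a pair $(a_i, p_i)$ and the order $\prec$ is the lexicographic order on such pairs, the $n$ symbols $\RLE_S[1], \ldots, \RLE_S[n]$ can be sorted under $\prec$ in $O(n \log n)$ time using any comparison-based sort: each comparison between two pairs takes constant time. Duplicates can then be removed and each pair assigned a rank in $\{1, \ldots, n'\}$ with $n' \leq n$.

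Next, I would form the integer string $T[1..n]$ where $T[i]$ is the rank of $\RLE_S[i]$ under $\prec$. By construction, for any two suffixes $\RLE_S[i..n]$ and $\RLE_S[j..n]$, their order with respect to the lexicographic extension of $\prec$ coincides exactly with the order of $T[i..n]$ and $T[j..n]$ under the ordinary integer order. Therefore the ordinary suffix array of $T$ equals $\RLESA_S$. Since $T$ has length $n$ over an integer alphabet of size at most $n$, a linear-time suffix array construction algorithm (such as DC3 or SA-IS) produces it in $O(n)$ time and $O(n)$ space.

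Summing the two phases gives $O(n \log n) + O(n) = O(n \log n)$ time and $O(n)$ space, as claimed. The only non-trivial step is the first one: for a general ordered alphabet $\Sigma$, we cannot sort the symbols of $\Sigma_{\RLE_S}$ faster than $O(n \log n)$, which is precisely where the $\log n$ factor in the overall bound comes from; the subsequent reduction to integer-alphabet suffix array construction is routine.
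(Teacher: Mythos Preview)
Your argument is correct. The paper's own proof is even terser: it observes that any two RL factors can be compared in $O(1)$ time and then invokes a comparison-based suffix-sorting algorithm such as Larsson--Sadakane, which runs in $O(n\log n)$ time directly on the length-$n$ string over $\Sigma_{\RLE_S}$. Your route differs in that you first spend the $O(n\log n)$ budget on ranking the alphabet $\Sigma_{\RLE_S}$, and only afterwards build the suffix array, now in linear time via DC3/SA-IS over an integer alphabet. Both approaches are valid and yield the same bound; the paper's is more direct (one black box instead of two), while yours has the mild expository advantage of isolating exactly where the $\log n$ factor arises --- in the alphabet reduction rather than in the suffix sorting itself.
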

\begin{proof}
  Any two RL factors can be compared in
  $O(1)$ time, so the lemma follows from algorithms such as in~\cite{larsson99:_faster_suffix_sortin}.
\end{proof}
Let $\RLERANK_S$ be an array of length $n = \size(\RLE_S)$
such that 
$
 \RLERANK_S[j] = i \iff \RLESA_S[i] = j.
$
Clearly $\RLERANK_S$ can be computed in $O(n)$ time 
provided that $\RLESA_S$ is already computed.
To make the notations simpler, in what follows we will denote
$rs(h) = \RLESA_S[h]$ and $rr(h) = \RLERANK_S[h]$
for any $1 \leq h \leq n$.

For any RLE strings $\RLE_X$ and $\RLE_Y$ with $\val(\RLE_X) = X$ and $\val(\RLE_Y) = Y$,
let $\lcp(\RLE_X,$ $\RLE_Y) = \lcp(X,Y)$,
i.e., $\lcp(\RLE_X,\RLE_Y)$ is the longest prefix of the ``uncompressed'' strings $X$ and $Y$.
It is easy to see that $Z=\lcp(\RLE_X,\RLE_Y)$ can be computed in $O(\size(\RLE_Z))$ time by
a naive comparison from the beginning of $\RLE_X$ and $\RLE_Y$,
adding up the exponent $p_i$ of the RL factors while $a_i^{p_i} = \RLE_X[i] = \RLE_Y[i]$,
and possibly the smaller exponent of the first
mismatching RL factors,
provided that they are exponents of the same character.




%
%
The following two lemmas imply an interesting and useful property of 
our data structure; although $\RLESA_S$ does not necessarily correspond
to the lexicographical order of the uncompressed RLE suffixes,
RLE suffixes that are closer to a given entry in the $\RLESA_S$ 
have longer longest common prefixes with that entry.

\begin{lemma} \label{lem:RLELCP_1}
Let $i,j$ be any integers such that $1 \leq i < j \leq n$.
(1) For any $j^\prime > j$, 
$|\lcp(\RLE_S[rs(i)..n],$ $\RLE_S[rs(j)..n])| \geq |\lcp(\RLE_S[rs(i)..n], \RLE_S[rs(j^\prime)..n])|$.
(2) For any $i^\prime < i$, 
$|\lcp(\RLE_S[rs(i)..n],$ $\RLE_S[rs(j)..n])| \geq
|\lcp(\RLE_S[rs(i^\prime)..n], \RLE_S[rs(j)..n])|$.
\end{lemma}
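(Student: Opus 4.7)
The plan is to imitate the classical suffix-array LCP-monotonicity argument, but carried out at the level of RL factors (with respect to the order $\prec$) rather than single characters, and then to translate the bound back to uncompressed strings. I will only spell out part~(1); part~(2) is symmetric.

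Fix $i < j < j'$, and write $s_h = \RLE_S[rs(h)..n]$ for brevity. Since $\RLESA_S$ sorts $\RLEsuffix(S)$ lexicographically with respect to $\prec$, we have $s_i \prec s_j \prec s_{j'}$. Let $m$ be the smallest index at which $s_i$ and $s_{j'}$ disagree as sequences of RL factors (handling as an edge case the situation in which one is an RL-prefix of the other). The first step is the standard observation: if $s_j$ deviated from $s_i = s_{j'}$ at any RL factor before index $m$, then at that factor either $s_j \prec s_i$ or $s_{j'} \prec s_j$, contradicting the sorted order. Hence $s_j[1..m-1] = s_i[1..m-1] = s_{j'}[1..m-1]$, and moreover $s_i[m] \preceq s_j[m] \preceq s_{j'}[m]$.

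The second step is to convert this RL-factor agreement into the uncompressed LCP. Let $P = \sum_{k<m} p^{(k)}$ be the total length contributed by the common prefix of RL factors. I would split on whether $s_i[m]$ and $s_{j'}[m]$ share a character: if they do not, then $|\lcp(\val(s_i),\val(s_{j'}))| = P$, while $|\lcp(\val(s_i),\val(s_j))| \geq P$, giving the claim. If they do share a character $c$, then because $\prec$ compares characters first, the sandwich $s_i[m] \preceq s_j[m] \preceq s_{j'}[m]$ forces $s_j[m]$ also to start with $c$; writing exponents $p_i \leq p_j \leq p_{j'}$ at factor $m$, a short calculation shows $|\lcp(\val(s_i),\val(s_{j'}))| = P + p_i$, and $|\lcp(\val(s_i),\val(s_j))|$ is at least $P + p_i$ (exactly $P + p_i$ when $p_j > p_i$, and at least $P + p_j = P + p_i$ plus additional contribution from further matching factors when $p_j = p_i$).

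The main obstacle is really a bookkeeping one: making sure the case $s_j[m]$ uses a different character is ruled out (which is exactly where the definition of $\prec$, with character as the primary key, is essential), and dealing cleanly with the degenerate case where one of the RLE suffixes is an RL-prefix of another (here the uncompressed LCP equals the length of the shorter suffix, and the chain $s_i \prec s_j \prec s_{j'}$ forces $s_j$ to extend $s_i$ by the same common factors, so the bound still holds). Once those cases are nailed down, the lemma follows, and part~(2) is obtained verbatim by swapping the roles of $i$ and $j$.
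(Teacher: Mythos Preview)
Your argument is correct and is essentially the same approach as the paper's: both exploit that the three RLE suffixes are $\prec$-ordered, locate the first RL-factor where the outer pair disagrees, sandwich the middle suffix there, and then observe that because $\prec$ compares the underlying character before the exponent, the single mismatching RL factor already determines the uncompressed $\lcp$. The only cosmetic difference is that the paper parameterises by the first mismatch positions $k$ (for $(i,j)$) and $k'$ (for $(i,j')$) and argues $k\ge k'$, whereas you fix $m$ for $(i,j')$ and show $s_j$ agrees with $s_i$ through factor $m-1$; these are equivalent formulations, and your write-up actually spells out the character/exponent case split (and the RL-prefix edge case) that the paper leaves implicit.
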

\begin{proof}
We only show (1). (2) can be shown by similar arguments. Let 
\begin{eqnarray*}
 k & = & \min\{t \mid \RLE_S[rs(i)..rs(i)+t-1] \neq \RLE_S[rs(j)..rs(j)+t-1]\} \mbox{ and } \\
 k^\prime & = & \min\{t^\prime \mid \RLE_S[rs(i)..rs(i)+t^\prime-1] \neq \RLE_S[rs(j^\prime)..rs(j^\prime)+t^\prime-1]\}.
\end{eqnarray*}
Namely, the first $(k-1)$ RL factors
of $\RLE_S[rs(i)..n]$ and $\RLE_S[rs(j)..n]$ coincide and the $k$th RL factors differ.
The same goes for $k^\prime$, $\RLE_S[rs(i)..n]$, and $\RLE_S[rs(j^\prime)..n]$.
Since $j^\prime > j$, $k \geq k^\prime$.
If $k > k^\prime$, then clearly the lemma holds.
If $k = k^\prime$, 
then $\RLE_S[rs(i)+k] \prec \RLE_S[rs(j)+k] \preceq \RLE_S[rs(j^\prime)+k]$.
This implies that $|\lcp(\RLE_S[rs(i)+k], \RLE_S[rs(j)+k])| \geq 
|\lcp(\RLE_S[rs(i)+k], \RLE_S[rs(j^\prime)+k])|$.
The lemma holds since
for these pairs of suffixes, the RL factors after the $k$th do not
contribute to their $\lcp$s.
\end{proof}


\subsection{LZ factorization using $\RLESA$}
\label{subsec:lz77sa_details}
In what follows we describe our algorithm that computes 
the s-factorization using $\RLESA_S$.
Assume that we have already computed the first $(j-1)$ s-factors 
$f_1, f_2, \ldots, f_{j-1}$ of string $S$.
Let $\sum_{h=1}^{j-1}|f_{h}| = \ell-1$, i.e., 
the next s-factor $f_{j}$ begins at position $\ell$ of $S$.
Let $d = \min \{k \mid \sum_{i=1}^{k}(p_{i}) \geq \ell \} + 1$,
i.e., the $(d-1)$-th RL factor $a_{d-1}^{p_{d-1}}$ contains
the occurrence of the $\ell$-th character $S[\ell] = a_{d-1}$ of $S$.
Let $q = \sum_{i=1}^{d-1}(p_{i}) - \ell + 1$,
i.e., $\RLE_{S[\ell..N]} = a_{d-1}^{q} a_{d}^{p_d} \cdots a_{n}^{p_n}$.
Note that $1 \leq q \leq p_{d-1}$.

The task next, is to find the longest previously occurring prefix of 
$a_{d-1}^{q} a_{d}^{p_d} \cdots a_{n}^{p_n}$ which will be $f_j$.
The difficulty here, compared to the non-RLE case, is that $f_j$
will not necessarily begin at positions in $S$ corresponding to an
entry in the $\RLESA_S$.
A key idea of our algorithm is that
rather than looking directly for $a_{d-1}^{q} a_{d}^{p_d}$ $\cdots$ $a_{n}^{p_n}$,
we look for the longest previously occurring prefix 
of $\RLE_S[d..n] = a_{d}^{p_d} \cdots a_{n}^{p_n}$ 
whose occurrence is {\em immediately preceded by} $a_{d-1}^q$,
which will have corresponding entries in $\RLESA_S$.
To compute $f_j$ in this way, we use the following lemma:
\begin{lemma} \label{lemma:LZ_PST}
Assume the situation mentioned above.
Let $k = \max(\{ p_i \mid a_{i} = a_{d-1}, 1 \leq i \leq d-2 \}\cup\{0\})$.
(Case 1) If $q = p_{d-1}$ and $q > k$,
then $|f_j| = \max\{k, 1\}$.
(Case 2) Otherwise, 
\[
 |f_{j}| = q + \max\{|\lcp(\RLE_S[rs(x_1)..n], \RLE_S[d..n])|, 
                     |\lcp(\RLE_S[d..n], \RLE_S[rs(x_2)..n])|\}
\] 
where 
 \begin{eqnarray*}
  x_1 & = & \max\{u \mid 1 < rs(u) < d,~u < rr(d),~a_{rs(u)-1} = a_{d-1},~p_{rs(u)-1} \geq q\} \mbox{ and}\\
  x_2 & = & \min\{v \mid 1 < rs(v) < d,~rr(d) < v,~a_{rs(v)-1} = a_{d-1},~p_{rs(v)-1} \geq q\}. 
 \end{eqnarray*} 
\end{lemma}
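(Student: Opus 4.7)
The plan is to classify all possible starting positions $\ell' < \ell$ of previous occurrences of $f_j$ by their RLE alignment with $S[\ell..N] = a_{d-1}^{q} a_{d}^{p_d} \cdots a_{n}^{p_n}$, and to determine which positions maximize the match length. My key structural observation is that any previous occurrence matching more than $q$ characters must align so that the $q$ copies of $a_{d-1}$ it matches form a suffix of some earlier maximal $a_{d-1}$-run. Indeed, the $(q+1)$-th character of $f_j$ is $a_d \neq a_{d-1}$, so the $q$ preceding $a_{d-1}$'s at $\ell'$ must end exactly at an RL-factor boundary in $\RLE_S$. This forces $\ell' = \sum_{i=1}^{m} p_i - q + 1$ for some $m$ with $a_m = a_{d-1}$ and $p_m \geq q$; the requirement $\ell' < \ell$ rules out $m = d-1$ (which would give $\ell' = \ell$), leaving $m \leq d-2$. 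Conversely, for any such $m$ this $\ell'$ yields a match of length exactly $q + |\lcp(\RLE_S[m+1..n], \RLE_S[d..n])|$, because after the $q$ matched $a_{d-1}$'s the comparison reduces to comparing the uncompressed RLE suffixes at factors $m+1$ and $d$.

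For Case 1 ($q = p_{d-1}$ and $q > k$), no earlier RL factor of $a_{d-1}$ reaches length $q$, so by the observation above no previous occurrence matches even $q$ characters of $a_{d-1}$. The best previous match is $a_{d-1}^k$, obtained at the end of the longest previous $a_{d-1}$-run, and it cannot extend further: $S[\ell + k] = a_{d-1}$ since $k < q = p_{d-1}$, while any previous $a_{d-1}^k$-run is followed by a different character by RLE maximality. Hence $|f_j| = k$ when $k \geq 1$, and $|f_j| = 1$ when $k = 0$, since then $a_{d-1}$ has no previous occurrence at all and $f_j$ is a single fresh character.

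For Case 2, a match of length at least $q$ is always attainable: via the self-reference at $\ell' = \ell - 1$ when $q < p_{d-1}$, or via a suffix of length $q$ of the longest previous $a_{d-1}$-run when $q = p_{d-1}$ and $q \leq k$. Any match exceeding $q$ has length $q + |\lcp(\RLE_S[m+1..n], \RLE_S[d..n])|$ for some admissible $u = rr(m+1)$ satisfying $1 < rs(u) < d$, $a_{rs(u)-1} = a_{d-1}$, and $p_{rs(u)-1} \geq q$. To locate the best admissible $u$, I would invoke Lemma~\ref{lem:RLELCP_1}, which implies that $|\lcp(\RLE_S[rs(u)..n], \RLE_S[d..n])|$ is monotonically non-increasing as $u$ moves away from $rr(d)$ in $\RLESA_S$; this monotonicity is preserved under the restriction to admissible entries, so the maximum is attained at $u = x_1$ (the largest admissible index below $rr(d)$) or $u = x_2$ (the smallest above), yielding the claimed formula. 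The main obstacle will be carefully justifying the alignment argument and cleanly excluding $m = d-1$ as a nontrivial alternative; once that is settled, the rest reduces to a direct application of Lemma~\ref{lem:RLELCP_1}.
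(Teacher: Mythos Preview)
Your proposal is correct and follows essentially the same approach as the paper: reduce to finding the longest previous occurrence of $a_d^{p_d}\cdots a_n^{p_n}$ immediately preceded by $a_{d-1}^q$, then apply Lemma~\ref{lem:RLELCP_1} to show the optimal candidate is the nearest admissible entry on either side of $rr(d)$ in $\RLESA_S$. Your write-up is in fact more careful than the paper's in justifying the alignment observation (that any match longer than $q$ forces the matched $a_{d-1}^q$ to be a suffix of an earlier maximal $a_{d-1}$-run) and in handling the boundary cases of Case~1.
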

\begin{proof}
Case (1) is when the new s-factor begins at the beginning of the
RL-factor $a_{d-1}^{p_{d-1}}$, and must end somewhere inside it.
Otherwise (Case (2)), $|f_j|$ is at least $q$ since
either $q < p_{d-1}$ and $a_{d-1}^q = S[\ell..\ell+q-1] = S[\ell-1..\ell+q-2]$
is a prefix of $f_j$ due to the self-referencing nature of s-factorization,
or, there exists an RL factor $a_i^{p_i}$ such that $1 \leq i \leq d-2$, $a_{d-1} = a_i$ and $q \leq p_i$.
Let $f_j = a_{d-1}^{q} X$.
Then $X$ is the longest of the longest common prefixes between $\RLE_S[d..n]$
and $\RLE_S[h..n]$ for all $1 \leq h < d$,
\emph{that are immediately preceded by $a_{d-1}^{q}$}, i.e.
$a_{h-1} = a_{d-1}$ and $p_{h-1} \geq q$.
It follows from Lemma~\ref{lem:RLELCP_1} that of all these $h$,
the one with the longest lcp with $\RLE_S[d..n]$ is
either of the entries that are closest to position $rr(d)$ in the
suffix array.
The positions $x_1$ and $x_2$ of these entries can be described
by the equation in the lemma statement, and thus the lemma holds.
\end{proof}

Once $x_1$ and $x_2$ of the above Lemma are determined,
our algorithm is similar to conventional non-RLE algorithms that use
the suffix array.
The main difficulty lies in computing these values,
since, unlike the non-RLE algorithms, 
they depend on the value $q$ which is determined only during the
s-factorization. The main result of this section follows.

\begin{theorem} \label{theo:nlogn_LZ}
Given $\RLE_S$ of any string $S\in\Sigma^*$,
we can compute the s-factorization of $S$ in $O(n \log n)$
time and $O(n)$ space where $n = \size(\RLE_S)$.
\end{theorem}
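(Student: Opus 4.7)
The plan is to combine the characterization of s-factors from Lemma~\ref{lemma:LZ_PST} with the priority search tree (PST) data structure of Theorem~\ref{theo:PST}, so that the positions $x_1$ and $x_2$ can be located on the fly in logarithmic time. First, I would build $\RLESA_S$ (and from it $\RLERANK_S$) in $O(n\log n)$ time and $O(n)$ space via Lemma~\ref{lem:RLESA}. Next, for each distinct character $c$ appearing in $\RLE_S$ I would allocate an initially empty PST $T_c$; because every suffix-array index $u$ will be inserted into exactly one such tree, the combined space remains $O(n)$.

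The algorithm then scans the s-factors $f_1, f_2, \ldots$ in order, maintaining the invariant that before computing $f_j$ with current RL-factor index $d$, the tree $T_c$ holds exactly the points $(rr(m+1), p_m)$ for every $1 \leq m \leq d-2$ with $a_m = c$. Since $d$ is non-decreasing across s-factors, the invariant is restored by inserting the newly exposed points whenever $d$ advances, contributing at most one $O(\log n)$ insertion per RL factor and hence $O(n\log n)$ overall. For the non-trivial case of Lemma~\ref{lemma:LZ_PST}, I would locate $x_1$ via $\MaxXInRectangle{1}{rr(d)-1}{q}$ on $T_{a_{d-1}}$ and $x_2$ via $\MinXInRectangle{rr(d)+1}{n}{q}$ on the same tree. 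These queries are correct because the rectangle constraints implement $u < rr(d)$, $u > rr(d)$, and $p_{rs(u)-1} \geq q$ directly, while the insertion schedule enforces $rs(u) < d$ and the per-character decomposition enforces $a_{rs(u)-1} = a_{d-1}$; the condition $rs(u) > 1$ is automatic since only $m \geq 1$ is ever inserted. Case~(1) of Lemma~\ref{lemma:LZ_PST} is detected in $O(1)$ time by maintaining, for each character $c$, the largest exponent $K_c$ seen so far in RL factors with $a_m = c$, updated along with the PST insertions.

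Once $x_1$ and $x_2$ are available, the two candidate LCPs are computed by the naive RL-factor-by-RL-factor comparison described after Lemma~\ref{lem:RLESA}, at a cost of $O(\size(\RLE_X)+1)$ for producing $f_j = a_{d-1}^q X$. Because every RL factor of $S$ can sit in the interior of at most one s-factor and at the boundary of at most two, essentially by the argument of Lemma~\ref{lem:RLE_LZ_linear}, these LCP costs sum telescopically to $O(n)$. Combining this with the $O(n)$ s-factors, each requiring $O(\log n)$ work in the PSTs, and the $O(n\log n)$ spent constructing $\RLESA_S$ and performing insertions, the algorithm meets the claimed $O(n\log n)$ time and $O(n)$ space bounds.

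The main obstacle I anticipate is verifying that the dynamic PST maintenance faithfully realises the selection criteria of Lemma~\ref{lemma:LZ_PST}: the interaction between the dynamically growing point set enforcing $rs(u) < d$, the character filter handled by the per-character decomposition, and the exponent threshold $B = q$ inside the rectangle query must be argued carefully, especially at the boundary where $f_j$ overlaps the current RL factor via self-reference. A secondary subtlety is that under a general ordered alphabet the mapping $a_{d-1} \mapsto T_{a_{d-1}}$ itself requires $O(\log n)$ time via a balanced BST keyed by characters, but this is absorbed in the stated complexity.
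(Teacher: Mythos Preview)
Your proposal is correct and follows essentially the same approach as the paper: build $\RLESA_S$ and $\RLERANK_S$, maintain a per-character family of PSTs whose points are $(rr(i),p_{i-1})$ for the already-seen RL factors, locate $x_1$ and $x_2$ via $\MaxXInRectangle$/$\MinXInRectangle$ on $T_{a_{d-1}}$, and compute the two candidate LCPs by naive RL-factor comparison with the total bounded through Lemma~\ref{lem:RLE_LZ_linear}. The only cosmetic difference is that the paper handles Case~(1) with a $\MaxYInRange{1}{n}$ query on the same PST rather than a separately maintained running maximum $K_c$, which does not affect correctness or the stated bounds.
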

\begin{proof}
First, compute $\RLESA_S$ in $O(n\log n)$ time, and $\RLERANK_S$  in $O(n)$ time.
Next we show how to compute each s-factor $f_j$  using Lemma~\ref{lemma:LZ_PST}.

Recall that the s-factor begins somewhere in the $(d-1)$-th RL factor.
We shall maintain, for each character $a \in \Sigma$,
a PST $T_{a}$ of Theorem~\ref{theo:PST}
for the set of pairs
$U_a^{d-1} = \{ (x, y) \mid x = rr(i),~a_{i-1} = a,~y = p_{i-1},~1 < i \leq d-1\}$,
i.e., the $x$ coordinate is the position in the suffix array, 
and the $y$ coordinate is the exponent of the preceding 
character of that suffix.
Then,
we can easily check whether the condition of case (1) is satisfied
by computing the $k$ of Lemma~\ref{lemma:LZ_PST} as $k = \MaxYInRange{1}{n}$ on $T_{a_{d-1}}$, 
which can be computed in $O(\log n)$ time by Theorem~\ref{theo:PST}\footnote{Actually, this 
  is an $O(1)$ operation on the PST, since
the information for the pair with maximum $y$ in the entire range of $x$,
is contained in the root of the PST.}.
For case (2), we obtain $x_1$ and $x_2$ as:
$x_1 = \MaxXInRectangle{1}{rr(d)-1}{q}$ and
$x_2 = \MinXInRectangle{rr(d)+1}{n}{q}$
in $O(\log n)$ time, using $T_{a_{d-1}}$.
To compute the length of the lcp's,
$\lcp(\RLE_S[rs(x_1)..n],$ $\RLE_S[d..n])$
and $\lcp(\RLE_S[d..n],$ $\RLE_S[rs(x_2)..n])$,
we simply use the naive algorithm mentioned previously,
that compares each RL factor from the beginning.
Since the longer of the two longest common prefixes is adopted,
the number of comparisons is at most twice the number of RL factors 
that is spanned by the determined s-factor.
From Lemma~\ref{lem:RLE_LZ_linear}, the total number of RL factors
compared, i.e. $\sum_{i=1}^{n_{LZ}} \size(\RLE_{f_i})$, is $O(n)$.

After computing the s-factor $f_j$, we update the PSTs.
Namely, if $f_j$ spans the RL factors $a_{d}^{p_{d}} \cdots a_{d+g}^{p_{d+g}}$,
then we insert pair $(rr(i), p_{i-1})$ into $T_{a_{i-1}}$,
for all $d \leq i \leq d+g$.
These insertion operations take $O(g \log n)$ time by Theorem~\ref{theo:PST},
which takes a total of $O(n \log n)$ time for computing all $f_j$.
Hence the total time complexity is $O(n \log n)$.

We analyze the space complexity of our data structure.
Notice that a collection of sets $U_{a}^{d-1}$ for all characters $a \in \Sigma$
are pairwise disjoint,
and hence $\sum_{a \in \Sigma}|U_{a}^{d-1}| = d-1$.
By Theorem~\ref{theo:PST}, 
the overall size of the PSTs 
is $O(n)$ at any stage of $d = 1, 2, \ldots, n$.
Since $\RLESA_S$ and $\RLERANK_S$ occupy $O(n)$ space each,
we conclude that the overall space requirement of our data structure is $O(n)$.
\end{proof}

\section{On-line LZ Factorization based on RLE}
Next, we present an on-line algorithm
that computes s-factorization based on RLE.
The term on-line here implies that for a string $S$
of (possibly unknown) length $N$, 
the algorithm iteratively computes the output for input string
$S[1..i]$ for each $i = 1,\ldots, N$ (the output of $S[1..i]$ can be
reused to compute the output for $S[1..i+1]$).
For example, $\RLE_S$ of string $S$ of length $N$ can be computed on-line in a total
of $O(N)$ time and $O(n)$ space (including the output), where $n = \size(\RLE_S)$.
In the description of our algorithms, this definition 
will be relaxed for simplicity, and we shall work on $\RLE_S$, 
where the s-factorization of
$\val(\RLE_S[1..i])$ is iteratively computed for $i = 1,2,\ldots n$.
Note that the off-line algorithm
described in the previous section cannot be directly transformed 
to an on-line algorithm, even if we simulate the suffix array using
suffix trees, which can be constructed on-line~\cite{Ukk95}.
This is because the elements inserted into the PST depended on the
lexicographic rank of each suffix, which can change dynamically in the
on-line setting.
Nonetheless, we overcome this problem by taking a different approach,
utilizing remarkable characteristics of a string index structure called
{\em directed acyclic word graphs} ({\em DAWGs})~\cite{blumer85:_small_autom_recog_subwor_text}.

\subsection{RLE DAWGs}
The DAWG of a string $S$
is the smallest automaton that accepts all suffixes of $S$.
Below we introduce an RLE version of DAWGs:
We regard $\RLE_S$ as a string of length $n$ over 
alphabet $\Sigma_{\RLE_S} = \{ \RLE_S[i] \mid i = 1, \ldots, n\}$.
For any $u \in \RLEsubstr(S)$, let $\Endpos_{\RLE_S}(u)$ denote
the set of positions where an occurrence of $u$ ends in $\RLE_S$,
i.e.,
$\Endpos_{\RLE_S}(u) = \{j \mid u = \RLE_S[i..j], 1 \leq i \leq j \leq
n\}$ for any $u\in\Sigma^+$ and
$\Endpos_{\RLE_S}(\varepsilon) = \{ 0, \ldots, n\}$.
Define an equivalence relation for any $u, w \in \RLEsubstr(S)$ by 
$
 u \equiv_{\RLE_S} w \Longleftrightarrow \Endpos_{\RLE_S}(u) = \Endpos_{\RLE_S}(w).
$
The equivalence class of $u \in \RLEsubstr(S)$ 
w.r.t. $\equiv_{\RLE_S}$ is denoted by $[u]_{\RLE_S}$.
When clear from context, we abbreviate the above notations
as $\Endpos$, $\equiv$ and $[u]$, respectively.
Note that for any two elements in $[u]$, one is a suffix of the
other (or vice versa).
We denote by $\longest{u}$ the
longest member of $[u]$.

\begin{definition}
The \emph{run length encoded DAWG} of a string $S \in \Sigma^*$, 
denoted by $\RLEDAWG_S$, is the DAWG of $\RLE_S$ 
over alphabet $\Sigma_{\RLE_S} = \{ \RLE_S[i] \mid i = 1, \ldots, n\}$.
Namely, $\RLEDAWG_S = (V, E)$ where
$V = \{[u] \mid u \in \RLEsubstr(S)\}$ and 
$E = \{([u], a^{p}, [u a^{p}]) 
\mid u, ua^{p} \in \RLEsubstr(S),~u \not\equiv ua^{p}\}.$
\end{definition}
We also define the set $F$ of labeled reversed edges, called {\em suffix links}, 
by 
$F = \{([a^{p}u], a^{p}, [u]) \mid u, a^{p}u \in \RLEsubstr(S), u = \longest{u}\}.$
See also Fig.~\ref{fig:RLE_DAWG} that illustrates 
$\RLEDAWG_S$ for $\RLE_S = \mathtt{a^3 b^2 a^5 b^2 a^5 c^4 a^{10}}$.
Since $\Endpos(\mathtt{b^2a^5}) = \Endpos(\mathtt{a^5}) = \{3, 5\}$,
$\mathtt{b^2a^5}$ and $\mathtt{a^5}$ are represented by the same node.
On the other hand, $\Endpos(\mathtt{a^3b^2a^5}) = \{3\}$
and hence $\mathtt{a^3 b^2 a^5}$ is represented by a different node.

\begin{figure}[t]
\centerline{
 \includegraphics[width=90mm]{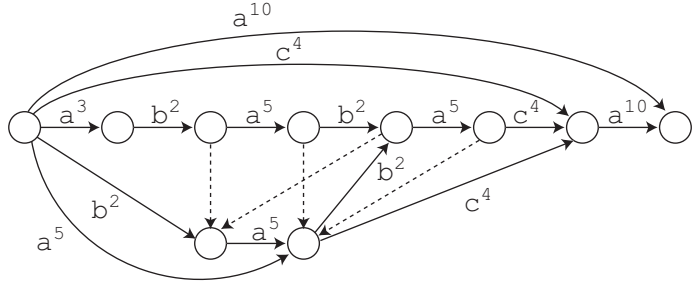}
}
\caption{Illustration for the RLE DAWG of $\mathtt{a^3 b^2 a^5 b^2 a^5 c^4 a^{10}}$.
The edges in $E$ are represented by the solid arcs, while the suffix links of some nodes are represented by dashed arcs (but their labels are omitted). For simplicity the suffix links of the other nodes are omitted in this figure.}
\label{fig:RLE_DAWG}
\end{figure}

\begin{lemma} \label{lem:RLEDAWG_size_const}
Given $\RLE_S$ of any string $S$ where $n = \size(\RLE_S)$,
$\RLEDAWG_S$ has $O(n)$ nodes and edges, and can be constructed in
$O(n \log n)$ time and $O(n)$ extra space in an on-line manner,
together with the suffix link set $F$.
\end{lemma}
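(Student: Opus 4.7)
The plan is to recognize that $\RLEDAWG_S$ is, by definition, the classical DAWG of $\RLE_S$ when $\RLE_S$ is treated as a string of length $n$ over the ordered alphabet $\Sigma_{\RLE_S}$, and then to invoke the foundational results of Blumer et al.~\cite{blumer85:_small_autom_recog_subwor_text} with minimal adaptation. In particular, the whole lemma reduces to checking that the standard construction still works when the symbols are RL factors rather than ordinary characters.

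For the size bound, I would appeal to the classical theorem that the DAWG of any string of length $n$ has at most $2n-1$ nodes and at most $3n-4$ edges, independent of the alphabet. Applied to $\RLE_S$, this gives $|V|, |E| = O(n)$. For the suffix link set $F$, I would observe that every non-source node has exactly one outgoing suffix link, so $|F| \leq |V| - 1 = O(n)$ as well.

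For the construction, the plan is to run the standard on-line DAWG construction of Blumer et al.\ on the symbol sequence $\RLE_S[1], \RLE_S[2], \ldots$, feeding it one RL factor $a_i^{p_i}$ per step. The only alphabet-sensitive operations are lookups, insertions and redirections of outgoing transitions at individual nodes; I would store the outgoing transitions of each node in a balanced binary search tree keyed by the order $\prec$ on $\Sigma_{\RLE_S}$. Since two RL factors compare in $O(1)$ time (first characters, then exponents), each such BST operation costs $O(\log n)$. By the standard amortized analysis of the on-line DAWG construction, the total number of node creations, edge additions/redirections, suffix link updates, and per-symbol transition probes is $O(n)$, so the total construction time becomes $O(n \log n)$. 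The space is $O(n)$ since the graph, the suffix links, and the union of all per-node BSTs each have size proportional to $|V|+|E|$.

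The main obstacle I anticipate is the treatment of the alphabet $\Sigma_{\RLE_S}$, which can grow to size $\Theta(n)$ and is revealed incrementally as new RL factors are read. Using direct-access arrays indexed by $\Sigma_{\RLE_S}$ at each node would blow up the space to $\Omega(n^2)$, so balanced BSTs per node are essential, and they are the source of the $\log n$ factor. A secondary concern is simply to verify that the correctness of Blumer et al.'s on-line construction carries over verbatim to the alphabet $\Sigma_{\RLE_S}$ with the order $\prec$; this is essentially automatic, since once RL factors are treated as atomic letters, $\RLE_S$ is just an ordinary string over a linearly ordered alphabet.
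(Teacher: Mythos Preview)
Your proposal is correct and follows essentially the same approach as the paper: recognize that $\RLEDAWG_S$ is, by definition, the ordinary DAWG of $\RLE_S$ viewed as a length-$n$ string over $\Sigma_{\RLE_S}$, then invoke the size and construction bounds of Blumer et al.\ with alphabet size $|\Sigma_{\RLE_S}| \leq n$ to obtain $O(n)$ nodes/edges/suffix links and $O(n\log n)$ construction time. Your version is more explicit about the per-node balanced BSTs and the $O(1)$ comparability of RL factors, but the argument is the same.
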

\begin{proof}
  A simple adaptation of the results from~\cite{blumer85:_small_autom_recog_subwor_text}.
  (See Appendix for full proof.)
\end{proof}

\subsection{On-line LZ factorization using $\RLEDAWG$}
The high-level structure of our on-line algorithm
follows that of the off-line algorithm described in the beginning of
Section~\ref{subsec:lz77sa_details}.
In order to find the longest previously occurring prefix of
$a_{d-1}^{q} a_{d}^{p_d} \cdots a_{n}^{p_n}$, which is the next s-factor,
we construct the $\RLEDAWG_S$ on-line 
for the string up to $\RLE_S[1..d-1] =a_{1}^{p_1} a_{2}^{p_2} \cdots
a_{d-1}^{p_{d-1}}$ and use it, instead of using the $\RLESA_S$.
The difficulty is, as in the off-line case, that
only the suffixes that start at a beginning of an RL factor
is represented in the $\RLEDAWG$.
Therefore, we again look for the longest previously occurring prefix
of $\RLE_S[d..n] = a_{d}^{p_d} \cdots a_{n}^{p_n}$ that is 
immediately preceded by $a_{d-1}^q$ in $S$,
rather than looking directly for $a_{d-1}^{q} a_{d}^{p_d}$ $\cdots$ $a_{n}^{p_n}$.
We augment the $\RLEDAWG$ with some more information to make this possible.

Let $E_{[u]}$ denote the set of out-going edges of node $[u]$.
For any edge $e = ([u], b^q, [ub^q]) \in E_{[u]}$
and each character $a \in \Sigma$,
define $\maxe_e(a) = \max(\{ p \mid a^p \longest{u} b^q
\in \RLEsubstr(S)\}\cup\{0\})$.
That is, $\maxe_e(a)$ represents the maximum exponent of the
RL factor with character $a$, that precedes $\longest{u}b^q$ in $S$.
\begin{lemma} \label{lem:maxe}
Given $\RLE_S$ of any string $S\in\Sigma^*$,
$\RLEDAWG_S = (V,E)$, augmented so that
$\maxe_e(a)$ 
can be computed in $O(\log\sigma_{S})$ time
for any $e \in E$ and any character $a \in \Sigma$,
can be constructed in an on-line manner
in a total of $O(n \log n)$ time with $O(n)$ space.
\end{lemma}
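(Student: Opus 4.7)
The plan is to augment the on-line $\RLEDAWG_S$ construction of Lemma~\ref{lem:RLEDAWG_size_const} by attaching to each edge $e = ([u], b^q, [v]) \in E$ a balanced binary search tree $B_e$ (e.g., a red-black tree) keyed on $\Sigma$, in which the value stored at key $a$ is the current $\maxe_e(a)$ whenever it is positive (and the key is omitted otherwise). Since at most $\sigma_S$ distinct characters of $\Sigma$ can ever precede the RLE substring $\longest{u}b^q$ in $S$, each $B_e$ has size at most $\sigma_S$, so a single lookup answers the query $\maxe_e(a)$ in $O(\log \sigma_S)$ time, and every individual insertion or max-update on $B_e$ also costs $O(\log \sigma_S)$.

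I would integrate the maintenance of the $B_e$'s into the suffix-link walk of the underlying on-line construction performed whenever a new RL factor $a_d^{p_d}$ is appended. For each new edge $e = ([w], a_d^{p_d}, \cdot)$ produced during the walk, I initialize $B_e$ with the single entry $(a_{d-k-1},\, p_{d-k-1})$, where $k$ is the number of RL factors in $\longest{w}$, provided $k < d-1$. For the (at most one) existing $a_d^{p_d}$-edge encountered at the walk's stopping point, I perform a max-update on its tree using the same key/value pair. The node-splitting step inherited from Blumer's construction is handled by partitioning the $B_e$ of the split incoming edge, with cost proportional to the number of entries moved and charged to their subsequent lifetime in the structure.

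The main obstacle is bounding both the aggregate number of BST operations and the aggregate size of the $B_e$'s by $O(n)$. A priori, appending $a_d^{p_d}$ could alter $\maxe_{e'}$ on every ancestor $a_d^{p_d}$-edge above the walk's stopping node in the suffix-link chain, threatening $\Theta(d)$ updates per step. The argument I would give is that each such update can be charged either (i) to the creation of a new edge, of which there are $O(n)$ in total by Lemma~\ref{lem:RLEDAWG_size_const}, or (ii) to a single amortized step of the suffix-link walk, whose total length over the entire on-line construction is $O(n)$ by the standard analysis of Blumer et al.~\cite{blumer85:_small_autom_recog_subwor_text}. Combined with the $O(\log \sigma_S)$ cost per BST operation, this yields $O(n\log n)$ total construction time and $O(n)$ total space, matching the bounds of the underlying unaugmented construction.
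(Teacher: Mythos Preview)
Your approach differs from the paper's in a key structural way. The paper does not store a BST per edge; instead it distinguishes two cases. If $\longest{u}b^q \neq \longest{ub^q}$ (the edge is ``secondary''), then every occurrence of $\longest{u}b^q$ is preceded by the \emph{same} RL factor, so $\maxe_e(a)$ is determined by a single stored $\Endpos$ value at the target node. If $\longest{u}b^q = \longest{ub^q}$ (the edge is ``primary''), then $\maxe_e(a)$ equals the maximum exponent among the incoming suffix links of the target node whose label character is $a$; a single BST per node, keyed on $\Sigma$ and updated whenever a new suffix link is created, suffices. Since $|F|=O(n)$, the total work and space are immediate.

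Your direct-maintenance scheme could be made to work, but the proposal as written has a genuine gap precisely where you flag the ``main obstacle.'' You worry that appending $a_d^{p_d}$ may change $\maxe_{e'}$ on every $a_d^{p_d}$-edge out of ancestors $[w_1],[w_2],\ldots$ above the stopping node $[w_0]$, and you propose to amortize these updates by charging them to (i) new edges or (ii) suffix-link-walk steps. Neither charge applies: those ancestor edges already existed, and the walk halts at $[w_0]$ without visiting $[w_1],[w_2],\ldots$, so nothing there can absorb the cost. What actually resolves the issue is that \emph{no update is needed} on those edges: for each ancestor $[w_i]$, the preceding RL factor of the new occurrence of $\longest{w_i}a_d^{p_d}$ is the label $c^{r}$ of the suffix link out of $[w_{i-1}]$; since $c^{r}\longest{w_i}\in[w_{i-1}]$ and $[w_{i-1}]$ already had an outgoing $a_d^{p_d}$-edge before this step, $c^{r}\longest{w_i}a_d^{p_d}$ was already in $\RLEsubstr$, so $\maxe_{e_i}(c)\geq r$ held already. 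You never establish this, and without it your algorithm is either incorrect (if you skip the updates) or not $O(n\log n)$ (if you perform them).

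A smaller issue: your description of node splitting (``partitioning the $B_e$ of the split incoming edge'') is not what is required. The out-going edges copied to $newChildState$ are all secondary (see the paper's Fig.~\ref{fig:DAWGsplit}), so each copied edge's BST must be a fresh singleton containing the unique preceding RL factor of $\longest{newChildState}b^q$; nothing is partitioned.
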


\begin{proof}
When computing $\maxe_e(a)$, consider the following cases:
(Case 1) $\longest{u}b^q$ is not the longest member of $[\longest{u}b^q]$, i.e.
$\longest{u}b^q\neq\longest{ub^q}$.
For any $j \in \Endpos(\longest{u}b^q)$ let $j^\prime = j - \size(u)$.
We have that $\RLE_S[j^\prime] = a_{j^\prime}^{p_{j^\prime}}$ where
$a_{j^\prime}^{p_{j^\prime}} \longest{u}b^q \equiv \longest{u}b^q$, 
i.e., $\longest{u}b^q$ is always immediately preceded by $a_{j^\prime}^{p_{j^\prime}}$ in $\RLE_S$.
Therefore, $\maxe_e(a) = p_{j^\prime}$ if $a_{j^\prime} = a$ and $0$ otherwise.
For any node $[v]\in V$, an arbitrary $j\in\Endpos(v)$ can be easily determined in $O(1)$ time
when the node is first constructed during the on-line
construction of $\RLEDAWG_S$, and does not need to be updated.

(Case 2) $\longest{u}b^q$ is the longest member of $[\longest{u} b^q]$, i.e.
$\longest{u}b^q =\longest{ub^q}$.
For each occurrence of
$a^p\longest{u} b^q = a^p\longest{ub^q} \in \RLEsubstr(S)$, 
there must exist a suffix link $([a^p \longest{u b^q}], a^p, [\longest{u b^q}])\in F$.
Therefore $\maxe_e(a)$ is the maximum of the
exponent in the labels of all such incoming suffix links, or $0$ if there are none.
By maintaining a balanced binary search tree at every edge $e$,
we can retrieve this value for any $a\in\Sigma$ in $O(\log \sigma_S)$ time.
It also follows from the on-line construction algorithm of $\RLEDAWG_S$
that the set of labels of incoming suffix links to a node only
increases, and we can update this value in $O(\log\sigma_S)$ time for
each new suffix link.
Since $|F| = O(n)$, constructing the balanced binary search trees
take a total of $O(n \log \sigma_S)$ time, and
the total space requirement is $O(n)$.

In order to determine which case applies,
it is easy to check whether $\longest{u} b^q$ is the longest element of $[\longest{u}b^q]$
in $O(1)$ time by maintaining the length of the longest path to any given
node during the on-line construction of $\RLEDAWG_S$.
This completes the proof.
\end{proof}

\begin{lemma}\label{lem:maxmaxe}
  Given $\RLE_S$ of any string $S\in\Sigma^*$,
  $\RLEDAWG_S = (V,E)$, augmented so that
  $\max\{ p \mid \maxe_e(a) \geq q, e = ([u],b^p,[w]) \in E_{[u]} \}$
  can be computed in $O(\log n)$ time
  for any $e \in E$, character $a \in \Sigma$, and integer $q\geq 0$,
  can be constructed in an on-line manner
  in a total of $O(n \log n)$ time with $O(n)$ space.
\end{lemma}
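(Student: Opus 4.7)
The plan is to augment the data structure of Lemma~\ref{lem:maxe} with a collection of priority search trees (Theorem~\ref{theo:PST}), one per pair (node, character). For each $[u]\in V$ and each $a\in\Sigma$, I maintain a PST $T_{[u],a}$ holding, for every outgoing edge $e=([u],b^p,[w])\in E_{[u]}$ with $\maxe_e(a)>0$, the point $(x,y)=(p,\maxe_e(a))$. The target query then reduces to $\MaxXInRectangle{1}{n}{q}$ on $T_{[u],a}$, which takes $O(\log n)$ time by Theorem~\ref{theo:PST}. The boundary case $q=0$ (where the $\maxe$ constraint is vacuous) I handle separately, by caching at each node the maximum label exponent over all its out-edges, updated in $O(1)$ per edge insertion. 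The PSTs are held in character-keyed dictionaries at each node so that only non-empty ones take space.

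The heart of the argument is bounding the aggregate size of these PSTs by $O(n)$. I split the stored points according to the two cases of the proof of Lemma~\ref{lem:maxe}. A Case~1 edge $e$ has $\maxe_e(a)>0$ for exactly one character $a$ (the one determined by $\Endpos(\longest{w})$), so Case~1 edges contribute at most $|E|=O(n)$ points in total. The structural observation that makes Case~2 manageable is that \emph{every node $[w]$ admits at most one Case~2 in-edge}: if $e_j=([u_j],b_j^{p_j},[w])$ ($j=1,2$) both satisfy the Case~2 condition $\longest{u_j}b_j^{p_j}=\longest{w}$, equating these two sequences of RLE factors forces $b_1^{p_1}=b_2^{p_2}$ (last factor) and $\longest{u_1}=\longest{u_2}$ (remaining prefix), so $e_1=e_2$. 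This is the standard statement that every DAWG node has a unique primary in-edge. Consequently Case~2 contributes at most $\sum_{[w]\in V}k_{[w]}$ points, where $k_{[w]}$ is the number of distinct characters appearing on incoming suffix links of $[w]$; since each suffix link contributes to one node and one character, $\sum_{[w]}k_{[w]}\leq|F|=O(n)$.

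For dynamic maintenance, every node $[w]$ stores a pointer to its (at most one) Case~2 in-edge. When a new edge $e=([u],b^p,[w])$ is added during the on-line construction, its case is decided in $O(1)$ via the longest-path lengths already tracked by Lemma~\ref{lem:maxe}. In Case~1 I insert the single point $(p,\maxe_e(a))$ into $T_{[u],a}$ for the unique $a$. In Case~2 I record $e$ at $[w]$ and traverse the BBST at $[w]$ produced by Lemma~\ref{lem:maxe}, inserting one point into $T_{[u],a}$ for each character $a$ currently attached to $[w]$. When a new suffix link with label $a^{p_a}$ enters $[w]$, I first update $[w]$'s BBST; then, if $\mu_{[w]}(a)$ strictly increased and $[w]$ already has a Case~2 in-edge $e_0=([u_0],b_0^{p_0},[w])$, I delete the stale point from $T_{[u_0],a}$ and insert $(p_0,p_a)$. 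By the space bound above the total number of PST insertions/deletions across the whole on-line construction is $O(n)$, each of cost $O(\log n)$, so the augmentation contributes $O(n\log n)$ time and $O(n)$ space on top of Lemma~\ref{lem:maxe}.

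The main obstacle is precisely the Case~2 uniqueness step. Without it, one would have to maintain, per node $[w]$, entries proportional to $d_{[w]}^{(2)}\cdot k_{[w]}$ (in-degree of Case~2 edges times distinct preceding characters), a quantity that can grow to $\Theta(n^2)$ in the worst case even when $\sum d_{[w]}^{(2)}$ and $\sum k_{[w]}$ are both $O(n)$; the linear space bound would then collapse. Once uniqueness is in hand the remaining work is a routine combination of the PST primitives of Theorem~\ref{theo:PST} with the on-line machinery already provided by Lemma~\ref{lem:maxe}.
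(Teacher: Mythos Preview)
Your proposal follows the same scheme as the paper --- priority search trees at each node, keyed by the preceding character $a$, so that the query becomes $\MaxXInRectangle{\cdot}{\cdot}{q}$. One omission: in the application (Algorithm~\ref{algo:rledawgmatch}, Line~\ref{algo:k2}) the query must range over out-edges with a \emph{fixed} label character $b=b_i$, whereas your $T_{[u],a}$ pools all label characters together and could return the exponent of a $c^p$-edge with $c\neq b_i$. The paper therefore indexes its PSTs as $T_{u,a,b}$, accessed through a two-level balanced BST. The fix is immediate (partition each of your $T_{[u],a}$ by $b$) and leaves all bounds intact, so this is a cosmetic gap rather than a structural one. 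The same remark applies to your $q=0$ cache.

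Where your argument genuinely differs is in establishing the $O(n)$ total PST size. The paper proves it dynamically, walking through every event of the Blumer et al.\ on-line construction --- five kinds of edge events and two kinds of suffix-link events --- and verifying that each contributes $O(1)$ PST updates. You give a static count instead: Case~1 (secondary) edges contribute one point each because exactly one $a$ has $\maxe_e(a)>0$; Case~2 (primary) in-edges are unique per target node, so their total contribution is at most $\sum_{[w]} k_{[w]}\leq |F|$. This is cleaner and isolates the structural reason the bound holds (primary-edge uniqueness) rather than checking it case by case. The trade-off is that the paper's event-by-event argument simultaneously certifies that on-line maintenance is cheap, whereas your maintenance sketch is thin on node splits --- a secondary edge becoming primary to the new node, outgoing edges being copied, and incoming secondary edges being redirected. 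These events in fact require no extra PST work beyond what you already account for (as the paper's Cases~3--5 confirm), but you should state this explicitly.
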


\begin{proof}
  During the on-line construction of the augmented $\RLEDAWG_S$ of
  Lemma~\ref{lem:maxe}, we further construct and maintain a family
  of PSTs at each node of $\RLEDAWG_S$ with a total size of $O(n)$,
  containing the information to answer the query in $O(n\log n)$ time.
  (See Appendix for full proof.)
\end{proof}

The next lemma shows how the augmented $\RLEDAWG_S$ can be used 
to efficiently compute the longest prefix of a given pattern string
that appears in string $S$. 

\begin{lemma} \label{lem:longest_prefix}
  For any pattern string $P \in \Sigma^*$,
  let $\RLE_P = b_{1}^{q_1} b_{2}^{q_2} \cdots b_{m}^{q_m}$.
  Given $\RLE_P$, we can compute the length of the longest prefix $P^\prime$ of $P$ 
  that occurs in string $S$ in $O(\size(\RLE_{P^\prime}) \log n)$ time,
  using a data structure of $O(n)$ space,
  where $n = \size(\RLE_S)$.
\end{lemma}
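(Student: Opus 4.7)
The plan is to walk down the augmented $\RLEDAWG_S$ from its initial node, consuming the RL factors $b_2^{q_2}, b_3^{q_3}, \ldots$ of $P$ as exact edge labels, while piggybacking the first RL factor $b_1^{q_1}$ of $P$ onto each traversed edge via its $\maxe$-information. The key structural observation is that every prefix of $P$ that occurs in $S$ has the form $b_1^{q_1} b_2^{q_2} \cdots b_{k-1}^{q_{k-1}} b_k^{q'}$, in which the \emph{middle} RL factors $b_2^{q_2},\ldots,b_{k-1}^{q_{k-1}}$ must coincide with consecutive RL factors of $S$ exactly (neighbouring RL factors of $S$ carry distinct characters), while the two boundary factors $b_1^{q_1}$ and $b_k^{q'}$ are allowed to sit inside longer RL factors of $S$.

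Concretely, I would first handle the initial factor by one query to the structure of Lemma~\ref{lem:maxmaxe} at the root, retrieving the largest exponent $p$ such that $b_1^p$ labels an outgoing edge of $[\varepsilon]$; the tentative match length is $\min(p,q_1)$, and the algorithm already terminates if $p<q_1$. Otherwise I enter the main loop: at iteration $i\ge 2$, standing at node $v_{i-1}=[u]$ with $u=b_2^{q_2}\cdots b_{i-1}^{q_{i-1}}$, I probe for the exact edge $e=(v_{i-1},b_i^{q_i},v_i)$ and verify its left-context condition. Writing $w=u b_i^{q_i}$, this condition is $\maxe_e(b_1)\ge q_1$ when $w=\longest{w}$; and, when $w$ is a proper suffix of $\longest{w}$, every occurrence of $w$ in $S$ is forced-preceded by the same characters, namely the prefix of $\longest{w}$ of length $|\longest{w}|-|w|$, so the condition collapses to checking whether the last RL factor of this forced prefix is $b_1^{p'}$ with $p'\ge q_1$---a value I can recover in $O(1)$ from one representative $\Endpos$-index stored at each node during the on-line construction of Lemma~\ref{lem:RLEDAWG_size_const}. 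If the test succeeds I accumulate $q_i$ and move to $v_i$; if it fails, one more $O(\log n)$ call to Lemma~\ref{lem:maxmaxe} (combined with the same forced-prefix fallback) yields the largest $p$ such that some outgoing edge $(v_{i-1},b_i^p,\cdot)$ is left-context compatible, and I add $\min(p,q_i)$ and terminate.

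Each iteration performs $O(1)$ DAWG work plus at most one $O(\log n)$ priority-search-tree query, and there is exactly one iteration per RL factor of the output prefix $P'$, giving the claimed $O(\size(\RLE_{P'})\log n)$ time; working space is the $O(n)$ of the augmented DAWG itself. Correctness hinges on the fact that taking the exact-edge extension whenever available dominates the partial-edge alternative, because continuation can only lengthen the eventual match. The main obstacle, as is typical with RLE DAWGs, is the case $w\neq\longest{w}$: here $\maxe$ on the incoming edge describes the left-context of $\longest{w}$ rather than of $w$ itself, so the proof must argue carefully that all occurrences of $w$ share a single forced left-context, that only its final RL factor matters for the preceded-by check, and that this one RL factor can be maintained in $O(1)$ per new node without disturbing the $O(n\log n)$ on-line DAWG construction budget.
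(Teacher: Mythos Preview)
Your proposal is correct and follows essentially the same approach as the paper: traverse $\RLEDAWG_S$ from the root with the RL factors $b_2^{q_2},b_3^{q_3},\ldots$, using the $\maxe$ information on each edge to enforce the ``preceded by $b_1^{q_1}$'' constraint, and fall back to a single $\MaxXInRectangle$-type query (Lemma~\ref{lem:maxmaxe}) to compute the optimal partial last factor when traversal fails. The paper packages your two cases (``$w=\longest{w}$'' versus ``forced left-context'') inside the definition of $\maxe_e$ itself (Lemma~\ref{lem:maxe}), and tracks the second case with a boolean \textsf{shortcut} flag that, once set, remains set---so it skips the left-context test entirely thereafter rather than re-checking the forced preceding RL factor at every step as you do; the two are equivalent because, as you note, once $u$ is a proper suffix of $\longest{u}$ every further extension inherits the same forced prefix. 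One small inaccuracy: locating the outgoing edge with label $b_i^{q_i}$ and evaluating $\maxe_e(b_1)$ each cost $O(\log n)$ (resp.\ $O(\log\sigma_S)$), not $O(1)$, but this does not affect your overall $O(\size(\RLE_{P'})\log n)$ bound.
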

\begin{proof}
\begin{algorithm2e}[t]
  \caption{Pattern Matching on $\RLEDAWG_S$.}
  \label{algo:rledawgmatch}
  \SetKw{KwOr}{or}
  \SetKw{KwAnd}{and}
  \SetKwData{KwSC}{shortcut}
  \SetKwData{KwTrue}{true}
  \SetKwData{KwFalse}{false}
  \SetKwData{KwCurN}{v}
  \KwIn{$\RLEDAWG_S = (V,E)$, $\RLE_P = b_1^{q_1} \cdots b_m^{q_m}$}
  $h = \max\{ q \mid ([\varepsilon], b_1^q, [w]) \in E_{[\varepsilon]}\}$\;
  \lIf{$(m = 1)$ \KwOr $(h < q_1)$}{Output $\min(h,q_1)$ and \Return \;\label{algo:firstchar}}
  $\KwSC := \KwFalse$; $\KwCurN := [\varepsilon]$\;
  \For{$i = 2,\ldots, m$}{
    \If{$e = (\KwCurN,b_i^{q_i},[w])\in E_{\KwCurN}$
      \KwAnd $(\KwSC = \KwTrue ~~\KwOr~~ \maxe_e(b_1) \geq q_1)$\label{algo:cantraverse}}{
      $\KwCurN := [w]$; \tcp{$\RLE_P[2..i] \in [w]$}
      \lIf{$\RLE_P[2..i]$ is not the longest element of $[w]$}{%
        $\KwSC := \KwTrue$ \;
      }
    }\Else{
      \tcp{$k$:maximum exponent of $b_i$ such that 
        $\val(b_{1}^{q_1} b_{2}^{q_2} \cdots b_{i}^{k}) \in\Substr(S)$.
      }
      \lIf{\KwSC = \KwTrue}{%
        $k := \max\{q \mid (\KwCurN, b_{i}^q, [w]) \in E_{\KwCurN}\}$\;
        \label{algo:k1}
      }\lElse{%
        $k := \max\{q \mid \maxe_{e}(b_1) \geq q_1, e = (\KwCurN,
        b_i^q, [w])\in E_{\KwCurN} \}$\; 
        \label{algo:k2}
      }
      Output $|\val(b_{1}^{q_1} b_{2}^{q_2} \cdots
      b_{i}^{\min\{q_i,k\}})|$ and \Return\;\label{algo:ret}
    }
  }
  Output $|P|$ and \Return; \tcp{$P$ itself occurs in $S$}
\end{algorithm2e}

The outline of the procedure is shown in Algorithm~\ref{algo:rledawgmatch}.
First, we check whether the first RL factor $b_1^{q_1}$ of $P$
is a substring of $S$ (Line~\ref{algo:firstchar}).
If so, the calculation basically proceeds by traversing
$\RLEDAWG_S$ with $b_2^{q_2}b_3^{q_3}\cdots$ until 
there is no outgoing edge with $b_i^{q_i}$
(i.e. $b_2^{q_2}\cdots b_i^{q_i} \not\in\RLEsubstr(S)$),
or, there is no occurrence of
$b_2^{q_2}\cdots b_i^{q_i}$ that is immediately preceded by $b_1^{q}$,
where $q \geq q_1$, in $S$. 
If $\textsf{shortcut} =\textsf{false}$, $b_2^{q_2}\cdots b_i^{q_i}$ is
the longest element in the node, and the latter check is conduced by
the condition $\maxe_{e}(b_1)\geq q_1$.
If $\textsf{shortcut} =\textsf{true}$, the character preceding any
occurrence of $b_2^{q_2}\cdots b_i^{q_i}$ is uniquely determined and
already checked in a previous edge traversal, so no further check is required.

By Lemmas~\ref{lem:RLEDAWG_size_const},~\ref{lem:maxe}, and~\ref{lem:maxmaxe}, 
the length of the longest prefix $P^\prime$ of $P$ that occurs in $S$ can be computed 
in $O(\size(\RLE_{P^\prime}) (\log n + \log \sigma_S)) = O(\size(\RLE_{P^\prime})\log n)$ time
using a data structure of $O(n)$ space.
\end{proof}

Below we give an example for Lemma~\ref{lem:longest_prefix}.
See Fig.~\ref{fig:RLE_DAWG} that illustrates 
$\RLEDAWG_S$ for $\RLE_S = \mathtt{a^3 b^2 a^5 b^2 a^5 c^4 a^{10}}$,
and consider searching string $S$ for pattern $P$ with $\RLE_P = \mathtt{a^{5} b^{2} a^{7}}$.
We start traversing $\RLEDAWG_S$
with the second RL factor $\mathtt{b^{2}}$ of $P$.
Since there is an out-going edge labeled $\mathtt{b}^{2}$ from the source node 
we reach node $v = [\mathtt{b^{2}}]$.
There are two suffix links that point to node $v$,
$([\mathtt{a^{3} b^{2}}], \mathtt{a^{3}}, [\mathtt{b^{2}}])$ 
and $([\mathtt{a^{5} b^{2}}], \mathtt{a^{5}}, [\mathtt{b^{2}}])$.
Hence $\maxe_{([\varepsilon],{\mathtt{b^2}},[\mathtt{b^2}])}(\mathtt{a}) = \max\{3, 5\} = 5$,
and thus the prefix $\mathtt{a^5 b^2}$ of $P$ occurs in $S$.
We examine whether a longer prefix of $P$ occurs in $S$
by considering the third RL factor $\mathtt{a^7}$.
There is no out-going edge from $v$ that is labeled $\mathtt{a^7}$,
hence the longest prefix of $P$ that occurs in $S$ is of the form 
$\mathtt{a^5 b^2 a^\ell}$ for some $\ell \geq 0$.
We consider the set $E_{v}(\mathtt{a})$ 
of out-going edges of $v$ that are labeled $\mathtt{a}^q$ for some $q$, 
and obtain $E_{v}(\mathtt{a}) = \{([\mathtt{b^2}], \mathtt{a^5}, [\mathtt{b^2 a^5}])\}$.
We have $\maxe_{([\mathtt{b^2}], \mathtt{a^5}, [\mathtt{b^2 a^5}])}(\mathtt{a}) = \max\{3, 5\} = 5$
due to the two suffix links pointing to $[\mathtt{b^2 a^5}]$.
Thus, 
the longest prefix of $P$ that occurs in $S$ is
$\mathtt{a^5 b^2 a^{\min\{7, 5\}}} = \mathtt{a^5 b^2 a^{5}}$.

\begin{theorem} \label{theo:online_lz}
  Given $\RLE_S$ for any string $S\in\Sigma^*$,
  the s-factorization of $S$ can be computed in an on-line manner
  in $O(n \log n)$ time and $O(n)$ extra space, where $\size(\RLE_S) = n$.
\end{theorem}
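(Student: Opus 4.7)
The plan is to mirror the off-line algorithm of Theorem~\ref{theo:nlogn_LZ}, replacing the $\RLESA$-based search by the $\RLEDAWG$-based search of Lemma~\ref{lem:longest_prefix}, while maintaining all structures on-line.

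First, I would construct the augmented $\RLEDAWG_S$ on-line via Lemmas~\ref{lem:RLEDAWG_size_const}, \ref{lem:maxe}, and \ref{lem:maxmaxe}. Each incoming RL-factor triggers the corresponding updates to the vertex and suffix-link sets and to the per-edge auxiliary structures supporting $\maxe_e(\cdot)$ queries, for a total of $O(n\log n)$ time and $O(n)$ space.

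Next, whenever the next s-factor $f_j$ is to be computed, I determine the containing RL-factor $d-1$ and the offset $q = \sum_{i \leq d-1} p_i - \ell + 1$ exactly as in Theorem~\ref{theo:nlogn_LZ}, forming the pattern $P = a_{d-1}^q a_d^{p_d}\cdots a_n^{p_n}$. The Case~1/Case~2 dichotomy of Lemma~\ref{lemma:LZ_PST} is realized inside Algorithm~\ref{algo:rledawgmatch}: its line~\ref{algo:firstchar} disposes of the situation where $f_j$ cannot escape the current RL-factor (with a small piece of glue to output $\max\{k,1\}$ instead of just $k$), while the body of the for-loop handles the extension in Case~2. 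Because the tail of $P$ is revealed only as later RL-factors arrive, I would run Algorithm~\ref{algo:rledawgmatch} in lock-step with input: on receiving each new RL-factor, first update the augmented DAWG, then attempt to advance the open match by one loop iteration; repeat until an attempted extension fails, at which point $f_j$ is emitted as in line~\ref{algo:ret} and a new pending s-factor is started at $\ell + |f_j|$.

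Finally, the complexity follows by combining Lemmas~\ref{lem:RLE_LZ_linear} and \ref{lem:longest_prefix}: each advance of the incremental Algorithm~\ref{algo:rledawgmatch} costs $O(\log n)$ and consumes one RL-factor of $\RLE_{f_j}$, so across all s-factors the total query cost is $O(\sum_j \size(\RLE_{f_j})\log n) = O(n\log n)$, matching the $O(n\log n)$ cost of DAWG maintenance; all auxiliary state fits in $O(n)$ space. I expect the main obstacle to be coordinating the DAWG with the open pattern match: one must argue that the $\maxe_e(\cdot)$-based ``preceded-by'' condition, evaluated on the DAWG as it grows past RL-factor $d-1$, still identifies only genuine previous occurrences and not the current (self) position, so that the incremental execution of Algorithm~\ref{algo:rledawgmatch} returns the same answer as a batch invocation on the fully assembled $P$.
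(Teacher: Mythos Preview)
Your proposal is correct and matches the paper's own proof essentially line for line: build the augmented $\RLEDAWG$ on-line (Lemmas~\ref{lem:RLEDAWG_size_const}--\ref{lem:maxmaxe}), invoke Lemma~\ref{lem:longest_prefix} to find each $f_j$, interleave DAWG updates with the per-RL-factor match steps to accommodate self-referencing factors, and sum the $O(\size(\RLE_{f_j})\log n)$ costs via Lemma~\ref{lem:RLE_LZ_linear}. What you flag as the ``main obstacle'' is precisely what the paper dismisses in one sentence as a ``minor technicality'' resolved by that same interleaving; neither you nor the paper spells out the correctness argument in more detail than that.
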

\begin{proof}
Assume the situation described in the first paragraph of
Section~\ref{subsec:lz77sa_details}.
In addition, assume that we have constructed $\RLEDAWG_S^{d-1}$,
the $\RLEDAWG$ (with augmentations described previously)
for $\RLE_S[1..d-1] = a_{1}^{p_1} a_{2}^{p_2} \cdots a_{d-1}^{p_{d-1}}$.
By definition,
the longest prefix $P^\prime$ of $P = S[\ell..N]$
such that $P^\prime\in\Substr(\val(a_{1}^{p_1} a_{2}^{p_2} \cdots a_{d-1}^{p_{d-1}}))$,
is a prefix of $f_{j}$.
By Lemma~\ref{lem:longest_prefix},
we can compute $|P^\prime|$ in $O(\size(\RLE_{P^\prime})\log d)$ time.
A minor technicality is when the longest previous occurrence of $f_j$
is self-referencing. This problem can be solved by simply
interleaving the traversal and update of $\RLEDAWG_S$ for each RL factor of $f_j$.
If we suppose that $f_j$ spans the RL factors $a_{d}^{p_{d}} \cdots a_{d+g}^{p_{d+g}}$,
we can traverse and update $\RLEDAWG_S^{d-1}$ to $\RLEDAWG_S^{d+g}$
in a total of $O(g \log n)$ time by Lemma~\ref{lem:maxmaxe}.
Thus, totaling for all $f_j$,
we can compute the s-factorization in $O(n \log n)$ time.
$O(n)$ space complexity follows from 
Lemmas~\ref{lem:RLEDAWG_size_const},~\ref{lem:maxe},~\ref{lem:maxmaxe}, and~\ref{lem:longest_prefix}.
For any $i>1$, the s-factorization of $\val(RLE_S[1..i-1])$ and
the s-factorization of $\val(RLE_S[1..i])$ differs only in the last
1 or 2 factors.
It is easy to see that the s-factorization of $\val(RLE_S[1..i])$ is
iteratively computed for $i = 1,\ldots, n$, 
and the computation is on-line.
\end{proof}

\section{Discussion}

We proposed off-line and on-line algorithms 
that compute a well-known variant of LZ factorization,
called s-factorization, of a given string $S$
in $O(N + n \log n)$ time using only $O(n)$ extra space,
where $N = |S|$ and $n = \size(\RLE_S)$.
After converting $S$ to $\RLE_S$ in $O(N)$ time and $O(1)$ extra space
(excluding the output), the main part of the algorithms work only on $\RLE_S$,
running in $O(n\log n)$ time and $O(n)$ space,
and therefore
can be more time and space efficient compared to
previous LZ factorization algorithms when the input strings are
compressible by RLE.
Our algorithms are theoretically significant in that
they are the only algorithms which achieve
$o(N\log N)$ time using only $o(N)$ extra space for strings with $n = o(N)$, 
thus offering a substantial improvement to the asymptotic time
complexities in calculating the s-factorization, for a non-trivial
family of strings.
Our algorithms can be easily extended to other variants of LZ factorization.
For example, let $m$ be the size of the s-factorization \emph{without} self-references
of a given string. Since Lemma~\ref{lem:RLE_LZ_linear} does not hold 
for s-factorization without self-references,
the time complexity of the algorithm is $O(N + (n+m) \log n)$. 
The working space remains $O(n)$ (excluding the output).

Since conventional string data such as natural language texts are not
usually compressible via RLE, 
the algorithms in this paper, although theoretically interesting,
may not be very practical.
However, our approach may still have
potential practical value for other types of data and objectives.
For example, a piece of music can be thought of as being
naturally expressed in RLE, where the pitch of the tone is a character,
and the duration of the tone is its run length.
Other than for the applications to string
algorithms~\cite{kolpakov99:_findin_maxim_repet_in_word,duval04:_linear},
mentioned in the Introduction,
Lempel Ziv factorization on such RLE compressible strings can be important,
due to an interesting application of compression, including LZ77 (gzip),
as a measure of distance between data, called Normalized
Compression Distance (NCD)~\cite{li04}.
NCD has been shown to be effective for various
clustering and classification tasks, including MIDI music data,
while not requiring in-depth prior knowledge of the
data~\cite{Cilibrasi05clusteringby,keogh07:_compr}.
The NCD between two strings $S$ and $T$ w.r.t. a compression
algorithm basically depends only on the compressed sizes of the strings
$S$, $T$, and their concatenation $ST$. Therefore,
efficiently computing their
s-factorizations from $\RLE_S$, $\RLE_T$, and $\RLE_{ST}$
would contribute to 
making the above clustering and classification tasks faster and more
space efficient.

Our algorithms are based on RLE variants of classical string data structures.
However, our approach does not necessarily make the use of succinct data structures impossible.
It would be interesting to explore how succinct data structures can be used in combination
with our approach to further improve the space efficiency.

\bibliographystyle{splncs03}
\bibliography{ref}
\newpage
\section*{Appendix}

This appendix provides complete proofs 
that were omitted due to lack of space.

\vspace*{5mm}
\noindent\textbf{Lemma~\ref{lem:RLEDAWG_size_const}.}
\textit{
  Given $\RLE_S$ of any string $S$ where $n = \size(\RLE_S)$,
  $\RLEDAWG_S$ has $O(n)$ nodes and edges, and can be constructed in
  $O(n \log n)$ time and $O(n)$ extra space in an on-line manner,
  together with the suffix link set $F$.
}

\begin{proof}
The proof is a simple adaptation of the results 
from~\cite{blumer85:_small_autom_recog_subwor_text}.
The DAWG of a string of length $m$ has $O(m)$ nodes and edges.
Since $\RLEDAWG_S$ is the DAWG of $\RLE_S$ of length $n$,
$\RLEDAWG_S$ clearly has $O(n)$ nodes and edges.
If $\sigma$ is the number of distinct characters appearing in $S$,
then the DAWG of a string of length $m$ can be constructed 
in $O(m \log \sigma)$ time and $O(m)$ space, in an on-line manner,
using suffix links.
Since $|\Sigma_{\RLE_S}| \leq n$,
$\RLEDAWG_S$ with $F$ can be constructed in $O(n \log n)$ time and 
extra $O(n)$ space, on-line.
\end{proof}

\vspace*{5mm}
\noindent\textbf{Lemma~\ref{lem:maxmaxe}.} 
\textit{
  Given $\RLE_S$ of any string $S\in\Sigma^*$,
  $\RLEDAWG_S = (V,E)$, augmented so that
  $\max\{ p \mid \maxe_e(a) \geq q, e = ([u],b^p,[w]) \in E_{[u]} \}$
  can be computed in $O(\log n)$ time
  for any $e \in E$, character $a \in \Sigma$, and integer $q\geq 0$,
  can be constructed in an on-line manner
  in a total of $O(n \log n)$ time with $O(n)$ space.
}
\begin{proof}
  During the on-line construction of the augmented $\RLEDAWG$ of
  Lemma~\ref{lem:maxe}, we further construct a family of 
  PSTs at each node.
  Let $T_{u,a,b}$ denote the PST at node $u\in V$ that contains the
  set of pairs
  $U_{u,a,b} = \{ (q\prime, \maxe_e(a)) \mid  e = ([u],b^{q\prime},[w]) \in E_{[u]}, \maxe_e(a) > 0\}$,
  where $a,b\in\Sigma$. 
  By maintaining a two-level balanced binary search tree
  (a balanced binary search tree inside each node of the first
   balanced binary search tree)
   at each node,
  $T_{u,a,b}$ can be accessed for any $a,b\in\Sigma$
  in $O(\log\sigma_S)$ time. Note that empty PSTs will not be
  inserted, and hence the total space will be proportional to the number of
  elements contained in all PSTs.
  Furthermore, the number of elements in a single PST is bounded by
  $O(n)$, so $\max\{ p \mid \maxe_e(a) \geq q, e = ([u],b^p,[w]) \in E_{[u]} \}$
  can be computed as $\MaxXInRectangle{1}{|S|}{q}$ on $T_{u,a,b}$,
  in $O(\log n)$ time by Theorem~\ref{theo:PST}.

  We now bound the total number of elements in all of the PSTs.
  Recall that $\maxe_e(a) = \max(\{ p \mid a^p \longest{u} b^q
  \in \RLEsubstr(S)\}\cup\{0\})$.

When a suffix link pointing to a node $[u]$ is created,
and when an out-going edge of $[u]$ is created,
we must update the PSTs associated with $[u]$.
An edge $([u], b^p, [v])$ is called primary if 
$\size(\longest{u})+1 = \size(\longest{v})$,
and is called secondary otherwise.

First, we consider the updates of the PSTs due to the suffix links.
Suffix links are created in the following situations 
(Also refer to~\cite{blumer85:_small_autom_recog_subwor_text}
for the on-line construction algorithm of DAWGs):
\begin{enumerate}
  \item \label{item:SufLinkFromSink} %
  The suffix link of the sink node is created.
  \item \label{item:OldToNew} %
  After a node $childState$ is split, then 
  a suffix link from $childState$ to $newChildState$ is created,
  where $newChildState$ is the new node created by the node split.
\end{enumerate}

Case \ref{item:SufLinkFromSink}: Let $[v]$ be the node that is pointed by
the suffix link of the sink node.
Let $e = ([u], b^p, [v])$ be the primary edge to $[v]$,
and $a^q$ be the RL factor which corresponds to the suffix link.
If $q > \maxe_e(a)$, then delete the pair $(p, \maxe_e(a))$ from the corresponding PST 
stored in $[u]$,
and insert a new pair $(p, q)$ into the PST.

Case \ref{item:OldToNew}:
There will be no updates in the PSTs.
We will discuss the details in Case \ref{item:SplitedEdge} of edge creation.

Next, we consider the updates of the PSTs due to the edges.
Edges are created in the following situations:
\begin{enumerate}
  \item \label{item:PrimaryEdgeToSink} %
  A primary edge from the old sink $currentSink$ to the new sink $newSink$ is created.

  \item \label{item:SecondaryEdgeToSink} %
  A secondary edge to $newSink$ is created.

  \item \label{item:SplitedEdge} %
  After a node $childState$ is split, then 
  the secondary edge to $childState$ from one of its parents $parentState$ becomes
  the primary edge from $parentState$ to $newChildState$.

  \item \label{item:CopiedEdge} %
  After a node $childState$ is split, then all the outgoing edges of $childState$ are copied 
  as the outgoing edges of $newChildState$.

  \item \label{item:EdgeToSplitedNode} %
  After a node $childState$ is split, some secondary edges to $childState$ are redirected and become secondary edges to $newChildState$.

\end{enumerate}

Case \ref{item:PrimaryEdgeToSink}: 
 Although a new primary edge is created, no pairs are inserted into nor deleted from  the PST since there are no suffix links to $newSink$.

Case \ref{item:SecondaryEdgeToSink}: 
Let $[u]$ be the node from which a secondary edge to $newSink$ is created.
We then insert a pair corresponding to the secondary edge into the PST of $[u]$.

Case \ref{item:SplitedEdge}: 
In this case, the secondary edge becomes a primary edge. 
So seemingly we might need to delete the pair corresponding to the existing secondary edge 
and insert a new pair corresponding to the incoming suffix link.
However, 
both pairs are actually identical, 
and hence we need no updates in the PSTs.

Case \ref{item:CopiedEdge}: Since the copied edges are all secondary edges (see Fig. \ref{fig:DAWGsplit}), similar updates to Case \ref{item:SecondaryEdgeToSink} 
are conducted for all the copied edges.

\begin{figure}[t]
\centerline{
 \includegraphics[width=90mm]{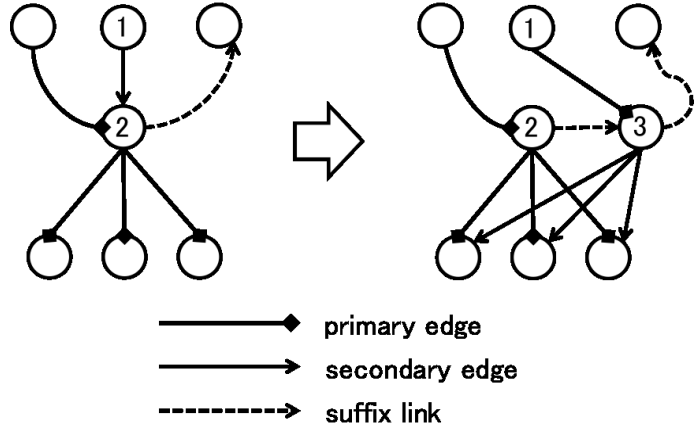}
}
\caption{Illustration for node split of the $\RLEDAWG$ construction algorithm.
Node $childState$ (depicted as $\#2$) is split 
into nodes $childState$ and $newChildState$ (depicted as $\#3$).
Note that all the outgoing edges of $newChildState$
are secondary edges.}
\label{fig:DAWGsplit}
\end{figure}

Case \ref{item:EdgeToSplitedNode}: 
Let $e$ and $e\prime$ be secondary edges 
before and after redirection, respectively.
By the property of the equivalence class,
we have that $\maxe_e(a) = \maxe_{e\prime}(a)$ for any character $a \in \Sigma$.
Hence we need no explicit updates of the PSTs.

By the above discussion, 
the number of update operations can be bounded by the number of added edges and suffix links.
Since the total number of edges and suffix links is $O(n)$, 
the total number of pairs in all of the PSTs
$\sum_{[u] \in V, a,b\in\Sigma}|U_{u,a,b}|$ is also $O(n)$.
The total time complexity for the updates is $O(n\log n)$.
\end{proof}

\end{document}